\documentclass[journal]{IEEEtran}
\usepackage{amsmath,amsfonts}
\usepackage{algorithmic}
\usepackage[ruled,linesnumbered]{algorithm2e}
\usepackage{array}
\usepackage[caption=false,font=normalsize,labelfont=sf,textfont=sf]{subfig}
\usepackage{textcomp}
\usepackage{stfloats}
\usepackage{url}
\usepackage{balance}
\usepackage{verbatim}
\usepackage{graphicx}
\usepackage[colorlinks]{hyperref}
\hypersetup{citecolor=blue}
\usepackage{threeparttable}
\usepackage{multirow}
\usepackage{amssymb}
\usepackage{bm} 
\usepackage{color}
\usepackage{mathtools}

\newcommand{\trans}{{\mathsf{T}}}
\newcommand{\herm}{{\mathsf{H}}}
\newcommand{\E}{\mathbb{E}}
\newcommand{\C}{\mathbb{C}}
\newcommand{\R}{\mathbb{R}}
\DeclareMathOperator{\diag}{diag}
\DeclareMathOperator{\tr}{tr}

\DeclareMathOperator{\SNR}{SNR}
\DeclareMathOperator{\Real}{Re}

\newtheorem{proposition}{Proposition}
\newtheorem{remark}{Remark}

\begin{document}

\title{Finite-Aperture Planar Fluid Antenna Array}

\author{Zhentian Zhang, 
Jingyuan Xu, 
Kai-Kit Wong,~\IEEEmembership{Fellow,~IEEE}, 
Hao Jiang,~\IEEEmembership{Senior Member,~IEEE},\\
Zaichen Zhang,~\IEEEmembership{Senior Member,~IEEE}, and
Hyundong Shin,~\IEEEmembership{Fellow,~IEEE}
\vspace{-5mm}
	
\thanks{This work of Hao Jiang is by the National Natural Science Foundation of China projects under Grant 62471238, and in part by the Postgraduate Research and Practice Innovation Program of Jiangsu Province under Grant SCJX-0529. The work of Kai-Kit Wong is supported by the Engineering and Physical Sciences Research Council (EPSRC) under Grant EP/W026813/1. The work of H. Shin is supported by the National Research Foundation of Korea (NRF) grant funded by the Korean government (MSIT) (RS-2025-00556064 and RS-2025-25442355), and by the Ministry of Science and ICT (MSIT), Korea, under the ITRC (Information Technology Research Center) support program (IITP-2025-RS-2021-II212046), supervised by the IITP (Institute for Information \& Communications Technology Planning \& Evaluation).}

\thanks{Zhentian Zhang, Jingyuan Xu and Zaichen Zhang are with the National Mobile Communications Research Laboratory, Southeast University, Nanjing, 210096, China. (e-mail: zhentianzhangzzt@gmail.com, jingyuanxu@seu.edu.cn, zczhang@seu.edu.cn).}
\thanks{Hao Jiang is with the School of Cyber Science and Engineering, Southeast University, Nanjing 210096, P. R. China (e-mail: jiang.hao@seu.edu.cn).}
\thanks{Kai-Kit Wong is with the Department of Electronic and Electrical Engineering, University College London, Torrington Place, WC1E 7JE, United Kingdom  (e-mail: kai-kit.wong@ucl.ac.uk). K. K. Wong is also affiliated with the Department of Electronic Engineering, Kyung Hee University, Yongin-si, Gyeonggi-do 17104, Republic of Korea.}
\thanks{Hyundong Shin is with the Department of Electronics and Information Convergence Engineering, Kyung Hee University, Yongin-si, Gyeonggi-do 17104, Republic of Korea (e-mail: hshin@khu.ac.kr).}	

\thanks{Corresponding authors: Kai-Kit Wong}
}

\maketitle

\begin{abstract}
Fluid antenna systems (FASs) are emerging as a reconfigurable-aperture technology that expands physical-layer design beyond fixed, rigid antenna geometries. While the \emph{fading diversity} of FASs---which exploits spatial channel fluctuations for signal enhancement and interference avoidance---has been widely studied, the \emph{geometry diversity} created by reconfigurable port placement remains far less understood, particularly for planar architectures under finite-aperture constraints. This paper develops a systematic analytical framework for finite-aperture planar fluid antenna arrays (FAAs). First, we derive a closed-form characterization of the minimum inter-port distance under uniform random placement over a rectangular aperture and show that it follows a Rayleigh law. Its mean scales as $\mathcal{O}(M^{-1})$, in sharp contrast to the $\mathcal{O}(M^{-2})$ behavior in the linear case in which $M$ represents the number of candidate ports, revealing a fundamentally more favorable packing geometry in two dimensions. Secondly, we establish a universal Cram\'{e}r-Rao bound (CRB) for joint elevation-azimuth estimation, governed by a $2\times 2$ \emph{geometric inertia matrix} whose determinant and eigenstructure fully capture the role of port placement in estimation precision. We further prove that both the trace and determinant of this matrix are invariant to the azimuth look direction. Third, we uncover an intrinsic \emph{precision--ambiguity trade-off}: maximizing the geometric determinant to minimize the CRB drives ports toward the aperture boundary, but simultaneously increases sidelobe-induced spatial ambiguity. To balance these competing effects, we propose a regularized greedy port-selection algorithm with a spatial-diversity reward, enabling $\mathcal{O}(1)$ per-candidate evaluation. Numerical results validate the analysis and illustrate that the proposed design outperforms uniform-grid and random baselines across a wide range of parameters.
\end{abstract}

\begin{IEEEkeywords}
Fluid antenna system (FAS), finite-aperture fluid planar array, Cram\'{e}r-Rao bound, inter-port distance.
\end{IEEEkeywords}

\section{Introduction}\label{sec:intro}
\subsection{Diversity Dimensions}
\IEEEPARstart{W}{ireless} systems are moving from fixed, hardware-determined physical layers to reconfigurable architectures that adapt in real time to dynamic propagation conditions. Originally proposed by Wong {\em et al.}~in \cite{wong2020_fas_limits,Wong2020}, the fluid antenna system (FAS) concept exemplifies this trend by treating the antenna as a reconfigurable physical-layer resource---through fluidic \cite{shen2024_surfacewave_fas,wang2026_em_reconfig_fas}, conductive \cite{zhu2024_fas_history}, dielectric \cite{Motovilova-2020}, or different programmable embodiments \cite{Zhang-jsac2026,liu2025_meta_fluid_optics,zhang2025_pixel_reconfig,liu2025_wideband_pixel_fas,wong2026_pixel_meet_fas}---to broaden system design and network optimization with reconfigurable radiation properties \cite{new2024tutorial,FAS3,hong2025contemporary,FAS4,FAS_wu_tuo1,CRB2}.

Early research on FAS has proposed channel models \cite{khammassi2023_analytical_fas,ramirez2024_blockcorr}, studied the diversity performance under different fading conditions \cite{new2024_fas_outage,vega2024_outage_diversity,alvim2024_alpha_mu,ghadi2023_copula_cl,ghadi2024_gaussian_copula}. Recent study also investigated the use of FAS at both ends, resulting in the multiple-input multiple-output (MIMO)-FAS setup, and examined its diversity-multiplexing tradeoff \cite{new2024_mimofas_infotheory}. An effective operation of FAS usually requires the channel state information (CSI) and therefore, CSI estimation for FAS has become an important research topic \cite{xu2024_channel_est_mmwave_fas,zhang2024_successive_bayes_fas,xu2025_sbl_fas}. Furthermore, the unique spatial agility enabled by FAS has also been shown to yield substantial gains across a wide range of paradigms, including orthogonal frequency division multiplexing (OFDM) \cite{OFDM1,OFDM2}, integrated sensing and communication (ISAC) \cite{wang2024_fas_isac_drl,zhou2024_fas_isac_wcl,zou2024_isac_tradeoff,ISAC2,FBL1}, reconfigurable intelligent surfaces (RIS) \cite{RIS1,RIS2,RIS3,RIS4}, physical-layer security \cite{tang2023_secret_fas,xu2024_jamming_fas,ghadi2024_security_fas,vega2024_secrecy_outage}, activity detection \cite{MA1,MA2}, and unsourced random access \cite{FBL2}, etc. Moreover, recent efforts have studied FAS in the finite blocklength regime \cite{FBL3,FBL4}. On the other hand, FAS has also led to the fluid antenna multiple access (FAMA) concept \cite{hong2}. Schemes such as fast FAMA \cite{MA0,FAMA1}, slow FAMA \cite{FAMA2}, coded FAMA \cite{hong2025coded,hong20255gcoded}, turbo FAMA \cite{waqar2026_turbocharging_fama,FAMA7} and compact ultra-massive array (CUMA) \cite{CUMA1,CUMA_SG} were proposed for massive connectivity without precoding and interference cancellation.

At a fundamental level, FAS is endowed with two distinct diversity mechanisms: \emph{fading diversity} and \emph{geometry diversity}. Fading diversity exploits the spatially correlated channel envelope within a given aperture, featuring deep fades and strength plateaus, and is particularly beneficial for interference avoidance and signal-strength maximization. Geometry diversity, by contrast, leverages the \emph{deterministic} spatial structure of the array manifold, which is both predictable and invariant to the propagation environment. Traditional techniques for exploiting antenna array geometry, such as beamforming \cite{beam1,mumimo-1,mumimo-2}, rely on costly hardware including phase shifters and power amplifiers \cite{beam2}. FAS circumvents this limitation by enabling beam steering through the selection of different port-position subsets \cite{beam3}. A recent study systematically analyzed geometry-driven performance gains in a linear fluid antenna array (FAA) \cite{Zhang2025}, establishing trade-offs between estimation precision and sidelobe ambiguity in one dimension. However, a dedicated investigation of geometry diversity for \emph{planar} fluid arrays, where the richer two-dimensional aperture geometry introduces qualitatively different analytical challenges, remains conspicuously absent, which motivates the present work.

\subsection{Finite-Aperture Design}
While the implementation for FAS is diverse, FAS always has a {\em finite aperture}. In fact, any antenna system should have a finite aperture which is limited by the permissible physical space. With that said, there have been antenna array designs that are not tied to any aperture budget. For example, a variety of sparse array geometries, including nested arrays, coprime arrays, and minimum-redundancy arrays \cite{Liu2017,Moffet1968}, are well known to considerably outperform the half-wavelength uniform linear array (ULA) in estimation performance. Under a \emph{finite} aperture, however, these classical design methodologies are not necessarily effective: the limited physical extent constrains the achievable geometric spread, making the interplay between port placement, estimation accuracy, and spectral ambiguity much more intricate. A theoretical framework that characterizes the geometry diversity in reconfigurable port positions in FAS and enables systematic balancing of conflicting performance objectives is thus highly desirable. \emph{Crucially, unlike the stochastic gains afforded by fading diversity, geometry diversity is not only fully predictable but also virtually guaranteed under any reasonable port configuration.}

\subsection{Contributions}
This work establishes a comprehensive analytical and algorithmic framework for the design of finite-aperture planar fluid antenna arrays (FAAs), where $M$ active ports are freely positionable within a prescribed rectangular aperture. Extending the FAS paradigm from one dimension to two dimensions introduces several challenges: (i) the minimum inter-port distance in a planar structure obeys a qualitatively different scaling law; (ii) the Cram\'{e}r-Rao bound (CRB) for joint angle estimation is governed by a \emph{matrix} of geometric statistics rather than a single scalar variance; and (iii) the feasible placement region is subject to pairwise distance constraints that preclude the simple sorting-based projection available in the linear case. Our contributions are summarized as follows:
\begin{itemize}
\item[\textbullet] \textbf{Rayleigh law for planar minimum spacing.} We derive the exact distributional characterization of the minimum inter-port distance $R_{\min}$ when $M$ ports are placed uniformly {\em at random} over a rectangular aperture of area $\mathcal{A}$. By using a geometric collision-probability argument with the Chen-Stein Poisson approximation, we prove that $R_{\min}$ follows a Rayleigh distribution with complementary cumulative density function (CCDF), probability density function (PDF), mean, and variance, all expressed in closed form. A major finding is that $\mathbb{E}[R_{\min}] \propto M^{-1}$, which is \emph{fundamentally distinct} from the $\mathcal{O}(M^{-2})$ scaling law of the linear-array counterpart. The origin of this gap is traced to the quadratic growth of the planar exclusion volume ($\pi r^{2}$) versus the linear growth ($2\delta$) in one dimension. This result provides a statistically principled benchmark for calibrating the minimum-distance design parameter $d_{\min}$ in finite-aperture simulations.

\item[\textbullet] \textbf{Universal closed-form CRB via the geometric inertia matrix.} We derive exact closed-form CRB expressions for joint elevation-azimuth estimation under an arbitrary planar port configuration. The key analytical innovation is the identification of a $2 \times 2$ \emph{geometric inertia matrix} $\bm{\mathcal{L}}_{\mathrm{geo}}(\mathbf{P}, \phi)$, constructed from the centered second-order statistics of the port positions {\em projected onto and perpendicular to the azimuth direction}. We prove that both $\operatorname{tr}(\bm{\mathcal{L}}_{\mathrm{geo}})$ and $\det(\bm{\mathcal{L}}_{\mathrm{geo}})$ are \emph{azimuth-invariant}, since the projection amounts to a unitary rotation of the port coordinates. The CRB expressions cleanly factorize into trigonometric prefactors (encoding wavefront geometry) and inertia entries (encoding the spatial distribution of ports), thereby providing a universal, closed-form benchmark applicable to \emph{any} planar port configuration and generalizing the scalar geometric variance of the linear-array case to a full matrix characterization.

\item[\textbullet] \textbf{Precision-ambiguity trade-off and regularized greedy algorithm.} Furthermore, we identify and rigorously characterize a fundamental \emph{precision-ambiguity trade-off} intrinsic to finite-aperture planar placement: maximizing the geometric determinant $\det(\bm{\mathcal{L}}_{\mathrm{geo}})$ (equivalently, minimizing the CRB) drives all ports toward the aperture boundary, which simultaneously produces large co-array gaps and elevated sidelobe levels. To navigate this trade-off, we propose a computationally efficient \emph{regularized greedy} port-selection algorithm that augments the determinant-optimal (D-optimal) criterion with a minimum-separation diversity reward. The algorithm features $\mathcal{O}(1)$ per-candidate determinant evaluation via incremental running-sum updates, a self-normalizing diversity weight $\beta = \beta_{0} \det(\bm{\mathcal{L}}_{\mathrm{geo}}(\mathcal{S}_{0})) / \mathcal{A}$ that renders the tuning parameter $\beta_{0}$ dimensionless and aperture-independent, and a total complexity of $\mathcal{O}((M-4)  N_{c} M)$ that remains negligible for practical array sizes.
\end{itemize}
%

\textit{Notations:} Throughout this paper, boldface lowercase and uppercase letters denote column vectors and matrices, respectively. Moreover, $(\cdot)^\trans$, $(\cdot)^\herm$, $(\cdot)^*$ denote transpose, Hermitian transpose, and element-wise conjugate, respectively. Also, $\|\cdot\|$ is the Euclidean norm while $\mathbf{I}_M$ is an $M\times M$ identity matrix. Finally, $\otimes$ denotes Kronecker product, and $\diag(\cdot)$ constructs a diagonal matrix with the input as the diagonal entries.

\section{System Model}\label{sec:model}
In this section, the system configurations and the signal model are explained. Specifically, the location coordinates of the ports (i.e., port placement) are considered as continuous variables instead of discrete variables.

\subsection{Finite-Aperture Planar Configuration} 
Consider a rectangular finite-aperture planar FAA at a transceiver with $M$ active ports whose two-dimensional locations are to be determined by different requirements. The physical aperture, normalized by the carrier wavelength $\lambda$, spans $W_x$ along the $x$-axis and $W_y$ along the $y$-axis, enclosing an area $\mathcal{A} = W_x W_y$. The port positions are collected in
\begin{equation}\label{eq:port_positions}
\mathbf{P} = [\bm{p}_1, \dots, \bm{p}_M]^\trans \in \R^{M \times 2},
\end{equation}
where $\bm{p}_m = (x_m, y_m)^\trans$ denotes the normalized position of the $m$-th port. To have a defined aperture, the four \textit{corner} ports are pinned at the aperture vertices:
\begin{equation}\label{eq:corner_constraint}
\bm{p}_1 = \bm{0},~\bm{p}_2 = (W_x, 0)^\trans,~\bm{p}_3 = (0, W_y)^\trans,~\bm{p}_4 = (W_x, W_y)^\trans.
\end{equation}
The remaining $M - 4$ ports are freely placed within the aperture $[0, W_x] \times [0, W_y]$. To prevent physical overlap, a minimum inter-port distance (radius) $d_{\min}$ is imposed:
\begin{equation}\label{eq:dmin_2d}
\|\bm{p}_i - \bm{p}_j\| \ge d_{\min},~\forall i \neq j.
\end{equation}

\subsection{Signal Model}
Without loss of generality, we consider line-of-sight transmission in the far field. Denote the elevation angle (measured from the array broadside, i.e., the $z$-axis) by $\theta$ and the azimuth angle by $\phi$. The direction cosines are $u = \sin\theta\cos\phi$ and $v = \sin\theta\sin\phi$. The steering vector $\bm{a}(\theta,\phi,\mathbf{P}) \in \C^M$ under port placement $\mathbf{P}$ is expressed as
\begin{equation}\label{eq:steer_2d}
\bm{a}(\theta,\phi,\mathbf{P}) = \left[ e^{-j2\pi(x_1 u + y_1 v)}, \dots, e^{-j2\pi(x_M u + y_M v)} \right]^\trans.
\end{equation}
The received signal vector $\mathbf{y}(t) \in \C^M$ at the $t$-th snapshot can be given by
\begin{equation}\label{eq:y_2d}
\mathbf{y}(t) = \bm{a}(\theta,\phi,\mathbf{P})\, s(t) + \mathbf{n}(t),
\end{equation}
where $s(t)$ denotes the transmitted source signal with average power $P_s = \E[|s(t)|^2]$ and $\mathbf{n}(t) \sim \mathcal{CN}(\mathbf{0}, \sigma_n^2 \mathbf{I}_M)$ represents the additive white Gaussian noise (AWGN). The signal-to-noise ratio (SNR) is therefore defined as $\SNR \triangleq P_s / \sigma_n^2$.

\section{Minimum Radius Distribution Analysis}\label{sec:spacing}
The minimum-distance constraint in \eqref{eq:dmin_2d} is introduced as a deterministic geometric constraint to prevent FAS from having overlapping ports. Although the physical footprint of each port is not explicitly modeled, it is still useful to identify a representative spacing scale for finite-aperture planar placement. This section analyzes the spacing statistics under an unconstrained random placement strategy over a rectangular aperture and derives a statistical benchmark for the minimum inter-port distance. This benchmark does not replace the deterministic geometric role of $d_{\min}$, but provides a calibrated reference for selecting its value in the computer simulations.

\subsection{Random Placement Setup}
In our model, $M$ active ports are placed independently and uniformly at random within the rectangular aperture $[0, W_x] \times [0, W_y]$ with area $\mathcal{A} = W_x W_y$. Let
\begin{equation}\label{eq:Rmin_def}
R_{\min} \triangleq \min_{1 \le i < j \le M} \|\bm{p}_i - \bm{p}_j\|
\end{equation}
represent the minimum pairwise Euclidean distance amongst all $\binom{M}{2} = M(M-1)/2$ port pairs. In contrast to the one-dimensional (i.e., linear) case, where the minimum spacing is naturally characterized by the order statistics of a single coordinate, the two-dimensional setting involves a \textit{pairwise distance random variable} whose exact distribution requires integration over a high-dimensional product space. To obtain a tractable characterization, we employ a geometric-probabilistic approach based on the Poisson pair-count approximation.

\subsection{Pairwise Collision Probability}
Consider an arbitrary pair $(i, j)$ with $i < j$. Since both $\bm{p}_i$ and $\bm{p}_j$ are independently and uniformly distributed over the aperture, the difference vector $\bm{d}_{ij} = \bm{p}_i - \bm{p}_j$ has a known \textit{triangular-type} distribution over the rectangle $[-W_x, W_x] \times [-W_y, W_y]$. The exact probability that $\|\bm{d}_{ij}\| \le r$ is given by the normalized area of the intersection
\begin{equation}\label{eq:exact_pair_prob}
\mathbb{P}\big(\|\bm{p}_i - \bm{p}_j\| \le r\big) = \frac{1}{\mathcal{A}^2} \int_{\mathcal{A}} \text{Area}\big(\mathcal{B}(\bm{p}_i, r) \cap \mathcal{A}\big)\, d\bm{p}_i,
\end{equation}
where $\mathcal{B}(\bm{p}_i, r) = \{\bm{q} \in \mathbb{R}^2 : \|\bm{q} - \bm{p}_i\| \le r\}$ is the disk of radius $r$ centered at $\bm{p}_i$. Evaluating \eqref{eq:exact_pair_prob} in closed form requires treating boundary clipping effects. Nonetheless, for the regime of interest, namely $r \ll \min(W_x, W_y)$, the disk $\mathcal{B}(\bm{p}_i, r)$ lies entirely within $\mathcal{A}$ for the vast majority of center positions $\bm{p}_i$. More precisely, let $\mathcal{A}_{\text{int}}(r) = [r, W_x - r] \times [r, W_y - r]$ denote the interior sub-region whose points are at least distance $r$ from every boundary. Its area is given by
\begin{align}
\text{Area}\big(\mathcal{A}_{\text{int}}(r)\big) &= (W_x - 2r)(W_y - 2r)\notag\\
&= \mathcal{A} - 2r(W_x + W_y) + 4r^2.\label{eq:interior_area}
\end{align}
For any $\bm{p}_i \in \mathcal{A}_{\text{int}}(r)$, the disk is fully contained, meaning that $\text{Area}(\mathcal{B}(\bm{p}_i, r) \cap \mathcal{A}) = \pi r^2$. The boundary strip $\mathcal{A} \setminus \mathcal{A}_{\text{int}}(r)$ has an area $2r(W_x + W_y) - 4r^2 = \mathcal{O}(r)$, and contributes at most $\pi r^2$ per point. Hence,
\begin{align}\label{eq:pair_prob_approx}
\mathbb{P}\big(\|\bm{p}_i - \bm{p}_j\| \le r\big)
&= \frac{1}{\mathcal{A}} \left[ \pi r^2 \frac{\text{Area}(\mathcal{A}_{\text{int}}(r))}{\mathcal{A}} + \mathcal{O}\!\left(\frac{r^3}{\mathcal{A}}\right) \right] \nonumber \\
&= \frac{\pi r^2}{\mathcal{A}} + \mathcal{O}\!\left(\frac{r^3}{\mathcal{A}^{3/2}}\right) \approx \frac{\pi r^2}{\mathcal{A}},
\end{align}
where the higher-order boundary correction is negligible when $r / \sqrt{\mathcal{A}} \to 0$. This ratio admits a clean geometric interpretation: the collision probability equals the fractional area of a disk of radius $r$ relative to the total aperture.

\subsection{Poisson Pair-Count Approximation}
Define the pair-indicator random variables $Z_{ij} = \mathbf{1}\{\|\bm{p}_i - \bm{p}_j\| \le r\}$ for each pair $(i,j)$ with $i < j$, and let $N_{\text{close}}(r) = \sum_{i < j} Z_{ij}$ count the total number of \textit{close pairs}. The event $\{R_{\min} > r\}$ is exactly $\{N_{\text{close}}(r) = 0\}$. The expected count is
\begin{equation}\label{eq:E_Nclose}
\mu(r) \triangleq \E\big[N_{\text{close}}(r)\big] = \binom{M}{2} \frac{\pi r^2}{\mathcal{A}} = \frac{M(M-1)\pi r^2}{2\mathcal{A}}.
\end{equation}
Now, the indicators $\{Z_{ij}\}$ are \textit{not} mutually independent. For example, if $\bm{p}_i$ is close to $\bm{p}_j$ and $\bm{p}_j$ is close to $\bm{p}_k$, then $\bm{p}_i$ and $\bm{p}_k$ are somewhat more likely to be close (triangle inequality). However, for $r$ in the neighborhood of the typical minimum spacing, only a vanishing fraction of the $\binom{M}{2}$ pairs have overlapping disks, so the pairwise correlations satisfy $\text{Cov}(Z_{ij}, Z_{ik}) = \mathcal{O}(r^4/\mathcal{A}^2)$. Summing over all dependent triplets $(i,j,k)$, we then obtain
\begin{equation}\label{eq:cov_bound}
\sum_{\substack{(i,j),(i,k) \\ j \neq k}} \text{Cov}(Z_{ij}, Z_{ik}) = \binom{M}{3} \mathcal{O}\!\left(\frac{r^4}{\mathcal{A}^2}\right) = \mathcal{O}\!\left(\frac{M^3 r^4}{\mathcal{A}^2}\right).
\end{equation}
The Chen-Stein theorem \cite{Arratia1989} guarantees that $N_{\text{close}}(r)$ converges in total variation to a Poisson random variable with mean $\mu(r)$ when $\mu(r) = \mathcal{O}(1)$ and the error in \eqref{eq:cov_bound} is $o(1)$. Both conditions are met in the regime $r = \mathcal{O}(M^{-1}\sqrt{\mathcal{A}})$, which is precisely where the minimum distance concentrates. Consequently, we have
\begin{equation}\label{eq:poisson_approx}
\mathbb{P}\big(N_{\text{close}}(r) = 0\big) \approx e^{-\mu(r)}.
\end{equation}

\begin{proposition}\label{prop:2d_min_dist}
Under the Poisson pair-count approximation (asymptotically exact as $M \to \infty$ with fixed $\mathcal{A}$), the CCDF, PDF, expected value, and variance of $R_{\min}$ are given by
\begin{itemize}
\item[(i)] CCDF---For $r \ge 0$,
\begin{equation}\label{eq:ccdf_2d}
\mathbb{P}(R_{\min} > r) \approx \exp\!\left( -\frac{M(M-1)\pi r^2}{2\mathcal{A}} \right).
\end{equation}
\item[(ii)] PDF---Differentiating the CDF gives $F_{R_{\min}}(r) = 1 - \mathbb{P}(R_{\min} > r)$, which, for $r\ge 0$, yields
\begin{equation}\label{eq:pdf_Rmin}
f_{R_{\min}}(r) = \frac{M(M-1)\pi r}{\mathcal{A}} \exp\!\left( -\frac{M(M-1)\pi r^2}{2\mathcal{A}} \right),
\end{equation}
which is a Rayleigh distribution with scale parameter
\begin{equation}\label{eq:sigma_R}
\sigma_R = \sqrt{\frac{\mathcal{A}}{M(M-1)\pi}}.
\end{equation}
\item[(iii)] Expected value---The mean of $R_{\min}$ is found as
\begin{equation}\label{eq:E_Rmin}
\E[R_{\min}] = \frac{1}{2}\sqrt{\frac{2\mathcal{A}}{M(M-1)}} = \frac{1}{2}\sqrt{\frac{2 W_x W_y}{M(M-1)}}.
\end{equation}
\item[(iv)] Variance---The variance of $R_{\min}$ is expressed as
\begin{equation}\label{eq:var_Rmin}
\mathrm{Var}(R_{\min}) = \frac{(4-\pi)\mathcal{A}}{2\pi M(M-1)}.
\end{equation}
\end{itemize}
\end{proposition}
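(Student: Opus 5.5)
The plan is to take the Poisson pair-count approximation \eqref{eq:poisson_approx} as given and reduce everything to standard Rayleigh computations. The exact identity $\{R_{\min} > r\} = \{N_{\text{close}}(r) = 0\}$ from the previous subsection gives
\begin{equation*}
\mathbb{P}(R_{\min} > r) = \mathbb{P}\big(N_{\text{close}}(r) = 0\big) \approx e^{-\mu(r)}.
\end{equation*}
Substituting the mean count \eqref{eq:E_Nclose}, itself built on the collision probability \eqref{eq:pair_prob_approx}, yields (i) directly. I would note that the error terms come from two sources: the boundary correction $\mathcal{O}(r^3/\mathcal{A}^{3/2})$ in \eqref{eq:pair_prob_approx}, and the Chen--Stein total-variation bound controlled by \eqref{eq:cov_bound}. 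Both vanish in the scaling window $r = \mathcal{O}(M^{-1}\sqrt{\mathcal{A}})$, which justifies the phrase ``asymptotically exact''.

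For (ii), I differentiate $F_{R_{\min}}(r) = 1 - \exp(-r^2/(2\sigma_R^2))$ with $\sigma_R^2 = \mathcal{A}/(M(M-1)\pi)$. This gives \eqref{eq:pdf_Rmin} and identifies it as the Rayleigh law with scale \eqref{eq:sigma_R}.

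For (iii) and (iv), I use the tail formula $\E[R_{\min}] = \int_0^\infty \mathbb{P}(R_{\min}>r)\,dr$, which is a half-Gaussian integral equal to $\sigma_R\sqrt{\pi/2}$. Simplifying,
\begin{equation*}
\sigma_R\sqrt{\pi/2} = \sqrt{\mathcal{A}/(2M(M-1))} = \tfrac{1}{2}\sqrt{2\mathcal{A}/(M(M-1))},
\end{equation*}
which matches \eqref{eq:E_Rmin}. Similarly, $\E[R_{\min}^2] = \int_0^\infty 2r\,\mathbb{P}(R_{\min}>r)\,dr = 2\sigma_R^2$. Therefore
\begin{equation*}
\mathrm{Var}(R_{\min}) = 2\sigma_R^2 - \tfrac{\pi}{2}\sigma_R^2 = \tfrac{4-\pi}{2}\sigma_R^2,
\end{equation*}
which is \eqref{eq:var_Rmin}.

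The main obstacle is not the calculus but justifying that the moments of the approximate law approximate the moments of $R_{\min}$. Convergence in total variation at each fixed $r$ does not by itself give convergence of integrals over $r\in[0,\infty)$. My approach is to rescale, setting $\tilde R = M R_{\min}/\sqrt{\mathcal{A}}$, and to argue uniform convergence of $\mathbb{P}(\tilde R > s)$ to $e^{-\pi s^2/2}$ on compact sets using the Chen--Stein bound. I would then handle the tail by dominated convergence. The needed domination follows from a crude bound: $R_{\min} \le \sqrt{W_x^2+W_y^2}$ always. In addition, a Gaussian-type tail bound for $\mathbb{P}(\tilde R > s)$ holds for large $s$. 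To get it, I would split the ports into disjoint pairs and use the independence of the events $\{\|\bm{p}_{2k-1}-\bm{p}_{2k}\|>r\}$, giving $\mathbb{P}(R_{\min}>r) \le (1-c r^2/\mathcal{A})^{\lfloor M/2\rfloor}$ for some constant $c>0$. This bound is integrable uniformly in $M$ after rescaling, and it controls both the first and second moments.

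If that tail argument proves delicate because of boundary effects near $r \sim \sqrt{\mathcal{A}}$, I would fall back on stating (iii) and (iv) as exact moments of the approximating Rayleigh law, consistent with the ``$\approx$'' already used in (i).
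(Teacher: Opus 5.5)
Your computations for (i)--(iv) are correct and follow the paper's route. The paper also reads (i) off $e^{-\mu(r)}$ and differentiates for (ii). For (iii)--(iv) it integrates $r f_{R_{\min}}$ and $r^2 f_{R_{\min}}$ with $t=\alpha r^2$, $\Gamma(3/2)$ and $\Gamma(2)$, which is your tail-integral computation after an integration by parts. The paper never argues that the true moments of $R_{\min}$ converge to the Rayleigh ones. It simply reports the moments of the approximating law, which is your fallback.

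The strengthening you sketch fails at the domination step. The disjoint-pair bound $\mathbb{P}(R_{\min}>r)\le(1-cr^2/\mathcal{A})^{\lfloor M/2\rfloor}$ uses only $\mathcal{O}(M)$ of the $\binom{M}{2}$ pairs. At the relevant scale $r=s\sqrt{\mathcal{A}}/M$ it becomes roughly $\exp(-cs^2/(2M))$, which tends to $1$ for each fixed $s$. Its integral over $0\le s\le M\sqrt{W_x^2+W_y^2}/\sqrt{\mathcal{A}}$ grows like $\sqrt{M}$, so it is not integrable uniformly in $M$. The diameter bound does not help either, since after rescaling it grows like $M$.

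What you need is a tail whose exponent is of order $M^2r^2/\mathcal{A}$, and sequential conditioning gives this. Suppose $\bm{p}_1,\dots,\bm{p}_k$ are pairwise farther than $r$ apart. Then the disks $\mathcal{B}(\bm{p}_i,r/2)$ are disjoint. When $r\le\min(W_x,W_y)$, each contains a quarter disk of area $\pi r^2/16$ lying inside the rectangle, and all of them lie in the region forbidden to $\bm{p}_{k+1}$. Hence the $k$-th conditional factor is at most $1-k\pi r^2/(16\mathcal{A})\le e^{-k\pi r^2/(16\mathcal{A})}$, and
\begin{equation*}
\mathbb{P}(R_{\min}>r)\le\prod_{k=1}^{M-1}\exp\!\Big(-\frac{k\pi r^2}{16\mathcal{A}}\Big)=\exp\!\Big(-\frac{M(M-1)\pi r^2}{32\mathcal{A}}\Big),\qquad r\le\min(W_x,W_y).
\end{equation*}
After rescaling this is at most $e^{-\pi s^2/64}$ for $M\ge 2$, which is integrable against both $1$ and $2s$. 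For $r>\min(W_x,W_y)$, monotonicity of the CCDF together with the diameter bound makes the remaining contribution $\mathcal{O}(M^2e^{-cM^2})$. With this replacement your dominated-convergence argument goes through and gives a stronger statement than the paper's.
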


\begin{IEEEproof}
\textit{Part (i): CCDF via geometric--probabilistic volume argument.}
The event $\{R_{\min} > r\}$ requires that \textit{every} pair satisfies $\|\bm{p}_i - \bm{p}_j\| > r$, which is equivalent to $\{N_{\text{close}}(r) = 0\}$. Applying the Poisson approximation in \eqref{eq:poisson_approx} with mean $\mu(r)$ from \eqref{eq:E_Nclose} directly yields \eqref{eq:ccdf_2d}.

\textit{Part (ii): PDF by differentiation of the CDF.}
The CDF is $F_{R_{\min}}(r) = 1 - e^{-\mu(r)}$. Let $\alpha \triangleq M(M-1)\pi / (2\mathcal{A})$ so that $\mu(r) = \alpha r^2$. Differentiating with respect to $r$,
\begin{align}\label{eq:pdf_derivation}
f_{R_{\min}}(r) &= \frac{d}{dr} F_{R_{\min}}(r)= \frac{d}{dr}\!\left(1 - e^{-\alpha r^2}\right) \nonumber \\
&= -e^{-\alpha r^2} \times \frac{d}{dr}\!\left(-\alpha r^2\right)= 2\alpha r\, e^{-\alpha r^2}.
\end{align}
Substituting $\alpha = M(M-1)\pi/(2\mathcal{A})$ back into \eqref{eq:pdf_derivation} confirms \eqref{eq:pdf_Rmin}, which is recognized as the PDF of a Rayleigh random variable $R \sim \text{Rayleigh}(\sigma_R)$ whose general form is $f_R(r) = \frac{r}{\sigma_R^2}\exp\!\left(-\frac{r^2}{2\sigma_R^2}\right)$, with matching coefficients: $\frac{1}{\sigma_R^2} = \frac{M(M-1)\pi}{\mathcal{A}}$, which gives $\sigma_R = \sqrt{\mathcal{A}/\big(M(M-1)\pi\big)}$.

\textit{Part (iii): Expected value.}
For $R \sim \text{Rayleigh}(\sigma_R)$, the mean is known to be
\begin{equation}\label{eq:rayleigh_mean_general}
\E[R] = \sigma_R \sqrt{\frac{\pi}{2}}.
\end{equation}
We verify this by direct integration. Using the substitution $t = \alpha r^2$ (so that $r = \sqrt{t/\alpha}$, $dr = \frac{1}{2\sqrt{\alpha t}}\, dt$), we get
\begin{align}\label{eq:mean_integral}
\E[R_{\min}] &= \int_0^{\infty} r\, f_{R_{\min}}(r)\, dr= 2\alpha \int_0^{\infty} r^2\, e^{-\alpha r^2}\, dr \nonumber \\
&= 2\alpha \int_0^{\infty} \frac{t}{\alpha}\, e^{-t}\, \frac{dt}{2\sqrt{\alpha t}}= \frac{1}{\sqrt{\alpha}} \int_0^{\infty} t^{1/2}\, e^{-t}\, dt \nonumber \\
&= \frac{1}{\sqrt{\alpha}}\, \Gamma\!\left(\frac{3}{2}\right)= \frac{1}{\sqrt{\alpha}}\frac{\sqrt{\pi}}{2},
\end{align}
in which $\Gamma(3/2) = \frac{1}{2}\Gamma(1/2) = \frac{\sqrt{\pi}}{2}$ is a standard Gamma function identity. Substituting $\alpha = M(M-1)\pi/(2\mathcal{A})$ gives
\begin{align}\label{eq:mean_substitution}
\E[R_{\min}] &= \frac{\sqrt{\pi}}{2} \sqrt{\frac{2\mathcal{A}}{M(M-1)\pi}} = \frac{1}{2}\sqrt{\frac{2\mathcal{A}}{M(M-1)}},
\end{align}
where the $\pi$ terms cancel, confirming \eqref{eq:E_Rmin}.

\textit{Part (iv): Variance.} The second moment is evaluated similarly. With the same substitution $t = \alpha r^2$, we have
\begin{align}\label{eq:second_moment}
\E[R_{\min}^2] &= \int_0^{\infty} r^2\, f_{R_{\min}}(r)\, dr= 2\alpha \int_0^{\infty} r^3\, e^{-\alpha r^2}\, dr \nonumber \\
&= 2\alpha \int_0^{\infty} \frac{t^{3/2}}{\alpha^{3/2}}\, e^{-t}\, \frac{dt}{2\sqrt{\alpha t}}= \frac{1}{\alpha} \int_0^{\infty} t\, e^{-t}\, dt \nonumber \\
&= \frac{1}{\alpha}\, \Gamma(2) = \frac{1}{\alpha}= \frac{2\mathcal{A}}{M(M-1)\pi}.
\end{align}
The variance then follows from $\mathrm{Var}(R_{\min}) = \E[R_{\min}^2] - \big(\E[R_{\min}]\big)^2$, which gives
\begin{align}\label{eq:variance_calc}
\mathrm{Var}(R_{\min})&= \frac{2\mathcal{A}}{M(M-1)\pi} - \frac{2\mathcal{A}}{4 M(M-1)} \nonumber \\
&= \frac{2\mathcal{A}}{M(M-1)} \left( \frac{1}{\pi} - \frac{1}{4} \right)= \frac{(4 - \pi)\mathcal{A}}{2\pi M(M-1)},
\end{align}
confirming \eqref{eq:var_Rmin}. Note that $4 - \pi \approx 0.858$, consistent with the well-known Rayleigh variance formula $\mathrm{Var}(R) = \frac{4-\pi}{2}\sigma_R^2$.
\end{IEEEproof}

\begin{remark}\label{remark:scaling_comparison}
\textit{The scaling law $\E[R_{\min}] \propto M^{-1}$ in the planar case is fundamentally different from the linear-array result $\E[\Delta_{\min}] = W_{\max}/(M^2-1) \propto M^{-2}$. The distinction originates from dimensionality: in one dimension, $M$ points on an interval produce $M-1$ gaps, and the minimum gap shrinks quadratically because the ``collision volume'' (length) scales linearly with the threshold $\delta$; in two dimensions, the collision volume scales as $\pi r^2$ (quadratically in $r$) providing each pair with a larger ``exclusion budget''. Thus, the planar aperture offers a quadratically more favorable packing geometry, partially compensating the $\binom{M}{2}$-fold growth of interacting pairs.}
\end{remark}

\begin{remark}\label{remark:dmin_guideline}
\textit{Proposition~\ref{prop:2d_min_dist} provides a principled guideline for selecting $d_{\min}$ for random placement strategy. Requiring that the constraint is active with probability no more than $\epsilon$ (i.e., $\mathbb{P}(R_{\min} \le d_{\min}) \le \epsilon$) yields, from \eqref{eq:ccdf_2d},
\begin{equation}\label{eq:dmin_guideline}
d_{\min} \le \sqrt{-\frac{2\mathcal{A}\ln(1-\epsilon)}{M(M-1)\pi}}\,.
\end{equation}
With the above result, for instance, with $M = 12$, $\mathcal{A} = 16\lambda^2$, and $\epsilon = 0.05$, this gives $d_{\min} \le 0.31\lambda$.}
\end{remark}

\subsection{Comparison with the Exact Linear-Array Result}
To further illuminate the dimensional contrast, we briefly recall the exact result for the linear FAA from \cite{Zhang2025}. For $M$ ports placed uniformly at random on $[0, W_{\max}]$, the minimum spacing $\Delta_{\min}$ has the \textit{exact} CCDF
\begin{equation}\label{eq:ccdf_1d_exact}
\mathbb{P}(\Delta_{\min} > \delta) = \left(1 - \frac{(M-1)\delta}{W_{\max}}\right)^M,~0 \le \delta \le \frac{W_{\max}}{M-1},
\end{equation}
from the deterministic volume-ratio argument (simplex transformation). In one dimension, no approximation is needed because the order-statistics structure enables a bijective, volume-preserving change of variables. In two dimensions, however, the pairwise distance constraint $\|\bm{p}_i - \bm{p}_j\| > r$ induces a complex exclusion geometry without a natural ordering, making an exact volume calculation intractable. The Poisson approximation in \eqref{eq:poisson_approx} circumvents this difficulty at the cost of a boundary correction that vanishes as $r / \sqrt{\mathcal{A}} \to 0$. A side-by-side asymptotic comparison is given in Table~\ref{tab:scaling}.

\begin{table}[t!]
\centering
\caption{Scaling comparison of minimum inter-port distance.}\label{tab:scaling}
\begin{tabular}{lcc}
\hline
\textbf{Metric} & \textbf{Linear FAA (1-D) \cite{Zhang2025}} & \textbf{Planar FAA (2-D) (this work)} \\
\hline
$\E[\cdot]$ & $\dfrac{W_{\max}}{M^2-1} \propto M^{-2}$ & $\dfrac{1}{2}\sqrt{\dfrac{2\mathcal{A}}{M(M-1)}} \propto M^{-1}$ \\[8pt]
Distribution & Beta-type & Rayleigh \\[2pt]
Derivation & Exact (simplex) & Poisson approx. \\[2pt]
Boundary & Compact support & $r \in [0, \infty)$ (approx.) \\
\hline
\end{tabular}
\end{table}

\section{Universal CRB for Planar FAA}\label{sec:crb}
In this section, we derive the exact closed-form expression for the CRB under finite-aperture planar budget. By explicitly \textit{relating the CRB to the geometric inertia matrix of port positions}, we here provide a theoretical benchmark universally applicable for different planar systems and different antenna placements. In the sequel, we omit the variable term $\mathbf{P}$ in the steering vector and treat port positions as deterministic constants to derive the CRB for the angle pair $(\theta, \phi)$.

\begin{proposition}[Universal CRB for Finite-Aperture Planar FAA]\label{prop:CRB}
Consider a planar FAA with $M$ ports at $\{(x_m, y_m)\}_{m=1}^{M}$ (in units of~$\lambda$) within a finite aperture, observing a single far-field source from direction $(\theta, \phi)$ over $T$ snapshots at operating $\mathrm{SNR} = P_s / \sigma_n^2$. Then the CRBs for the elevation and azimuth angles admit the closed-form expressions
\begin{align}
\mathrm{CRB}(\theta) &= \frac{1}{8\pi^2 T \times \mathrm{SNR} \times \cos^2\!\theta}\cdot \frac{\mathcal{L}_{rr}}{\mathcal{L}_{qq}\,\mathcal{L}_{rr} - \mathcal{L}_{qr}^2},\label{eq:prop_CRB_theta} \\
\mathrm{CRB}(\phi) &= \frac{1}{8\pi^2 T \times \mathrm{SNR} \times \sin^2\!\theta}\cdot \frac{\mathcal{L}_{qq}}{\mathcal{L}_{qq}\,\mathcal{L}_{rr} - \mathcal{L}_{qr}^2},\label{eq:prop_CRB_phi}
\end{align}
where $\mathcal{L}_{qq}$, $\mathcal{L}_{rr}$, and $\mathcal{L}_{qr}$ are the entries of the $2\times 2$ \emph{geometric inertia matrix}, given by
\begin{equation}\label{eq:prop_Lgeo}
\bm{\mathcal{L}}_{\mathrm{geo}}(\mathbf{P}, \phi) =
\begin{bmatrix}
\mathcal{L}_{qq} & \mathcal{L}_{qr} \\
\mathcal{L}_{qr} & \mathcal{L}_{rr}
\end{bmatrix},
\end{equation}
computed from the port positions projected onto and perpendicular to the azimuth look direction. Specifically, for each port~$m$, define the rotated coordinates $q_m = x_m \cos\phi + y_m \sin\phi,~r_m = -x_m \sin\phi + y_m \cos\phi,$ with centroids $\bar{q} = \frac{1}{M}\sum_{m} q_m$ and $\bar{r} = \frac{1}{M}\sum_{m} r_m$. The inertia entries are then the centered second-order statistics $\mathcal{L}_{qq} = \!\sum_{m=1}^{M}(q_m - \bar{q})^2, \mathcal{L}_{rr} = \!\sum_{m=1}^{M}(r_m - \bar{r})^2, \mathcal{L}_{qr} = \!\sum_{m=1}^{M}(q_m - \bar{q})(r_m - \bar{r})$. Both $\tr(\bm{\mathcal{L}}_{\mathrm{geo}})$ and the determinant $\det(\bm{\mathcal{L}}_{\mathrm{geo}}) = \mathcal{L}_{qq}\,\mathcal{L}_{rr} - \mathcal{L}_{qr}^2$ are invariant to the azimuth angle~$\phi$, since $q_m$ or $r_m$ is a unitary rotation of~$(x_m, y_m)$. The corresponding $2\times 2$ Fisher information matrix (FIM) that yields \eqref{eq:prop_CRB_theta} and \eqref{eq:prop_CRB_phi} upon inversion is
\begin{equation}\label{eq:prop_FIM}
\mathbf{J}(\bm{\eta}) = 8\pi^2 T\mathrm{SNR} \times
\begin{bmatrix}
\cos^2\!\theta\;\mathcal{L}_{qq} & \cos\theta\sin\theta\;\mathcal{L}_{qr} \\
\cos\theta\sin\theta\;\mathcal{L}_{qr} & \sin^2\!\theta\;\mathcal{L}_{rr}
\end{bmatrix},
\end{equation}
which is non-singular whenever $\det(\bm{\mathcal{L}}_{\mathrm{geo}}) > 0$, i.e., the ports are not collinear after projection.
\end{proposition}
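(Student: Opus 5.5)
The plan is to compute the Fisher information for $\bm{\eta}=(\theta,\phi)$ from the Slepian--Bangs / deterministic-signal CRB formula. The unknown waveform samples $s(t)$ are treated as nuisance parameters, and their sample power is identified with $P_s$. Because the steering vector \eqref{eq:steer_2d} has unit-modulus entries, the standard concentrated result for a single source applies:
\begin{equation*}
\mathbf{J}(\bm{\eta}) = \frac{2T P_s}{\sigma_n^2}\,\Real\!\left\{\mathbf{D}^\herm \boldsymbol{\Pi}_{\bm{a}}^{\perp}\mathbf{D}\right\},\qquad \boldsymbol{\Pi}_{\bm{a}}^{\perp} = \mathbf{I}_M - \tfrac{1}{M}\bm{a}\bm{a}^\herm ,
\end{equation*}
where $\mathbf{D}=[\partial\bm{a}/\partial\theta,\ \partial\bm{a}/\partial\phi]$. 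The projector $\boldsymbol{\Pi}_{\bm{a}}^{\perp}$ is what absorbs the unknown complex amplitude. The centering by $\bar q,\bar r$ in $\bm{\mathcal{L}}_{\mathrm{geo}}$ should come out of this projector.

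\textbf{Step 1: derivatives.} I first rewrite the phase as $x_m u + y_m v = \sin\theta\,(x_m\cos\phi + y_m\sin\phi) = \sin\theta\, q_m$. Then $a_m = e^{-j2\pi \sin\theta\, q_m}$. Differentiating, and using $\partial q_m/\partial\phi = r_m$, gives
\begin{equation*}
\partial a_m/\partial\theta = -j2\pi\cos\theta\, q_m a_m, \qquad \partial a_m/\partial\phi = -j2\pi\sin\theta\, r_m a_m .
\end{equation*}
Hence $\mathbf{D} = -j2\pi\,\diag(\bm{a})\,[\cos\theta\,\bm{q},\ \sin\theta\,\bm{r}]$.

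\textbf{Step 2: projection.} Since $\diag(\bm{a})^\herm\diag(\bm{a})=\mathbf{I}_M$ and $\diag(\bm{a})^\herm \bm{a}=\mathbf{1}$, the middle factor reduces to
\begin{equation*}
\diag(\bm{a})^\herm \boldsymbol{\Pi}_{\bm{a}}^{\perp}\diag(\bm{a}) = \mathbf{I}_M - \tfrac{1}{M}\mathbf{1}\mathbf{1}^\trans ,
\end{equation*}
which is the real centering matrix. Therefore
\begin{equation*}
\mathbf{D}^\herm\boldsymbol{\Pi}_{\bm{a}}^{\perp}\mathbf{D} = 4\pi^2\,\diag(\cos\theta,\sin\theta)\,\bm{\mathcal{L}}_{\mathrm{geo}}\,\diag(\cos\theta,\sin\theta).
\end{equation*}
This matrix is already real, and multiplying by $2T\,\SNR$ gives exactly \eqref{eq:prop_FIM}.

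\textbf{Step 3: inversion.} Inverting the $2\times2$ matrix directly, the determinant is
\begin{equation*}
\det\mathbf{J} = (8\pi^2T\,\SNR)^2\cos^2\theta\sin^2\theta\,\det\bm{\mathcal{L}}_{\mathrm{geo}} .
\end{equation*}
The diagonal entries of $\mathbf{J}^{-1}$ are then \eqref{eq:prop_CRB_theta} and \eqref{eq:prop_CRB_phi}, after cancelling the common factor $\cos^2\theta\sin^2\theta$.

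\textbf{Step 4: invariance and nonsingularity.} Let $\mathbf{S}=\sum_m(\bm{p}_m-\bar{\bm{p}})(\bm{p}_m-\bar{\bm{p}})^\trans$ be the centered scatter matrix, and let $\mathbf{R}(\phi)$ be the rotation sending $(x,y)$ to $(q,r)$. Centering commutes with the linear map, so $\bm{\mathcal{L}}_{\mathrm{geo}}=\mathbf{R}(\phi)\mathbf{S}\mathbf{R}(\phi)^\trans$. This is an orthogonal similarity, so the trace and determinant of $\bm{\mathcal{L}}_{\mathrm{geo}}$ equal those of $\mathbf{S}$ and do not depend on $\phi$. By Cauchy--Schwarz, $\det\bm{\mathcal{L}}_{\mathrm{geo}}>0$ unless the centered points are collinear. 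I will also note that nonsingularity of $\mathbf{J}$ additionally needs $\sin\theta\cos\theta\neq0$; the statement leaves this implicit.

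\textbf{Main obstacle.} I expect the hardest part to be justifying the nuisance-parameter treatment and the constant $8\pi^2$. A known $s(t)$ would give the uncentered sums $\sum q_m^2$ and no centering. The deterministic unknown-amplitude model gives exactly the centered form with factor $2T\,\SNR$. I would state this model explicitly and cite the standard single-source deterministic CRB. I would add a remark that the stochastic-model CRB coincides with it up to the factor $(1+1/(M\,\SNR))$, so it agrees at moderate-to-high SNR.
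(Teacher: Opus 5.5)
Your proposal is correct and follows essentially the same route as the paper. Both use the Slepian--Bangs FIM with the projector $\mathbf{I}_M - \frac{1}{M}\bm{a}\bm{a}^\herm$, write the derivatives through the rotated coordinates $q_m, r_m$, let the projector generate the centering, invert the $2\times 2$ matrix directly, and invoke rotation invariance of the scatter matrix. Your identity $\diag(\bm{a})^\herm(\mathbf{I}_M - \frac{1}{M}\bm{a}\bm{a}^\herm)\diag(\bm{a}) = \mathbf{I}_M - \frac{1}{M}\mathbf{1}\mathbf{1}^\trans$ is a compact matrix form of the paper's term-by-term expansion (Terms I--IV).

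Your added observation that nonsingularity of $\mathbf{J}$ also requires $\sin\theta\cos\theta \neq 0$ is correct and tightens the statement's condition $\det(\bm{\mathcal{L}}_{\mathrm{geo}})>0$.
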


\begin{IEEEproof}
We structure our proof into several parts below.
\paragraph{FIM}
Recalling the received signal from \eqref{eq:y_2d}, the parameter vector to be estimated is $\bm{\eta} = [\theta, \phi]^\trans$. Under the deterministic signal model, the $2 \times 2$ FIM with respect to $\bm{\eta}$ by the Slepian-Bangs formula \cite[Section~3.3]{Stoica2005} is given as
\begin{equation}\label{eq:fim_2d}
[\mathbf{J}]_{ij} = \frac{2TP_s}{\sigma_n^2} \Real\!\left\{ \dot{\bm{a}}_i^\herm \mathbf{P}_{\bm{a}}^{\perp} \dot{\bm{a}}_j \right\},~i,j \in \{1,2\},
\end{equation}
where $\Real\{\cdot\}$ takes the real part of the argument, $\dot{\bm{a}}_i = \partial \bm{a} / \partial \eta_i$ denotes the partial derivative of the steering vector with respect to the $i$-th parameter, and $\mathbf{P}_{\bm{a}}^{\perp}$ is the orthogonal projection matrix onto the null space of $\bm{a}(\theta,\phi)$, given by
\begin{equation}\label{eq:Pa_2d}
\mathbf{P}_{\bm{a}}^{\perp} = \mathbf{I}_M - \frac{\bm{a}(\theta,\phi)\bm{a}^\herm(\theta,\phi)}{\bm{a}^\herm(\theta,\phi)\bm{a}(\theta,\phi)} = \mathbf{I}_M - \frac{1}{M}\bm{a}\bm{a}^\herm,
\end{equation}
where the simplification uses $|[\bm{a}]_m| = |e^{-j2\pi(x_m u + y_m v)}| = 1$ and hence $\bm{a}^\herm\bm{a} = M$. The FIM in \eqref{eq:fim_2d} is a $2 \times 2$ symmetric matrix with four entries $(J_{\theta\theta}, J_{\phi\phi}, J_{\theta\phi}, J_{\phi\theta})$, where $J_{\theta\phi} = J_{\phi\theta}$ by symmetry. Next, we derive each entry step-by-step below.

\paragraph{Derivative of the Steering Vector with Respect to $\theta$} To obtain an explicit relationship between estimation accuracy and array geometry, we must evaluate the derivatives in the quadratic forms of \eqref{eq:fim_2d}. Recalling the direction cosines $u = \sin\theta\cos\phi$ and $v = \sin\theta\sin\phi$, the $m$-th element of the steering vector in \eqref{eq:steer_2d} is expressed as
\begin{equation}\label{eq:am_element}
[\bm{a}]_m = e^{-j2\pi(x_m u + y_m v)} = e^{-j2\pi(x_m \sin\theta\cos\phi + y_m \sin\theta\sin\phi)}.
\end{equation}
Its partial derivative with respect to $\theta$ can be computed as
\begin{align}\label{eq:adot_theta_full}
\frac{\partial [\bm{a}]_m}{\partial \theta}&= e^{-j2\pi(x_m u + y_m v)} (-j2\pi) \times\notag\\
&\quad\quad\frac{\partial}{\partial\theta}\big(x_m \sin\theta\cos\phi + y_m \sin\theta\sin\phi\big) \nonumber \\
&= [\bm{a}]_m (-j2\pi) \cos\theta \big(x_m \cos\phi + y_m \sin\phi\big).
\end{align}
To compactly express this result, we introduce the \textit{projected positions} along the azimuth look direction:
\begin{equation}\label{eq:proj_q}
q_m \triangleq x_m \cos\phi + y_m \sin\phi,
\end{equation}
which represents the projection of the $m$-th port onto the unit vector $(\cos\phi, \sin\phi)^\trans$. With this notation, \eqref{eq:adot_theta_full} simplifies to
\begin{equation}\label{eq:adot_theta_compact}
\frac{\partial [\bm{a}]_m}{\partial \theta} = -j2\pi (\cos\theta) \times q_m \times [\bm{a}]_m.
\end{equation}
Defining $\mathbf{D}_q = \diag(q_1, q_2, \dots, q_M)$, the full derivative vector can be written as
\begin{equation}\label{eq:adot_theta}
\dot{\bm{a}}_\theta \triangleq \frac{\partial \bm{a}}{\partial \theta} = -j2\pi\cos\theta\, \mathbf{D}_q\, \bm{a}.
\end{equation}

\paragraph{Derivative of the Steering Vector with Respect to $\phi$} Similarly, the partial derivative of $[\bm{a}]_m$ with respect to $\phi$ is
\begin{align}\label{eq:adot_phi_full}
\frac{\partial [\bm{a}]_m}{\partial \phi}&= [\bm{a}]_m (-j2\pi) \frac{\partial}{\partial\phi}\big(x_m \sin\theta\cos\phi + y_m \sin\theta\sin\phi\big) \notag\\
&= [\bm{a}]_m (-j2\pi) \sin\theta \big(-x_m \sin\phi + y_m \cos\phi\big).
\end{align}
We define the projected position along the direction \textit{perpendicular} to the azimuth look direction:
\begin{equation}\label{eq:proj_r}
r_m \triangleq -x_m \sin\phi + y_m \cos\phi,
\end{equation}
which is the projection of the $m$-th port onto the unit vector $(-\sin\phi, \cos\phi)^\trans$. Note that $(q_m, r_m)$ is obtained from $(x_m, y_m)$ by a planar rotation of angle $\phi$:
\begin{equation}\label{eq:rotation}
\begin{bmatrix} q_m \\
r_m
\end{bmatrix}
= \underbrace{\begin{bmatrix} \cos\phi & \sin\phi \\ -\sin\phi & \cos\phi \end{bmatrix}}_{\triangleq\, \mathbf{R}(\phi)}
\begin{bmatrix} x_m \\ 
y_m
\end{bmatrix},
\end{equation}
where $\mathbf{R}(\phi)$ is a unitary rotation matrix with $\det(\mathbf{R}) = 1$, preserving distances and areas. Similarly, with notation $\mathbf{D}_r = \diag(r_1, r_2, \dots, r_M)$, the full derivative vector is given by
\begin{equation}\label{eq:adot_phi}
\dot{\bm{a}}_\phi \triangleq \frac{\partial \bm{a}}{\partial \phi} = -j2\pi\sin\theta\, \mathbf{D}_r\, \bm{a}.
\end{equation}

\paragraph{Expansion of the Quadratic Forms}
The FIM in \eqref{eq:fim_2d} involves three distinct quadratic forms: $\mathcal{Q}_{\theta\theta} = \dot{\bm{a}}_\theta^\herm \mathbf{P}_{\bm{a}}^{\perp} \dot{\bm{a}}_\theta$, $\mathcal{Q}_{\phi\phi} = \dot{\bm{a}}_\phi^\herm \mathbf{P}_{\bm{a}}^{\perp} \dot{\bm{a}}_\phi$, and $\mathcal{Q}_{\theta\phi} = \dot{\bm{a}}_\theta^\herm \mathbf{P}_{\bm{a}}^{\perp} \dot{\bm{a}}_\phi$. We derive the first in full detail and the remaining two follow analogously.

Substituting \eqref{eq:Pa_2d} into $\mathcal{Q}_{\theta\theta}$, we get
\begin{equation}\label{eq:Qtt_expand}
\mathcal{Q}_{\theta\theta} = \dot{\bm{a}}_\theta^\herm \left(\mathbf{I}_M - \frac{1}{M}\bm{a}\bm{a}^\herm\right) \dot{\bm{a}}_\theta= \underbrace{\dot{\bm{a}}_\theta^\herm \dot{\bm{a}}_\theta}_{\text{Term I}} - \frac{1}{M}\underbrace{\dot{\bm{a}}_\theta^\herm \bm{a}\bm{a}^\herm \dot{\bm{a}}_\theta}_{\text{Term II}}.
\end{equation}
First, we evaluate Term I. This can be done by plugging \eqref{eq:adot_theta} and recalling that $|[\bm{a}]_m| = 1$, giving
\begin{align}\label{eq:TermI_tt}
\dot{\bm{a}}_\theta^\herm \dot{\bm{a}}_\theta&= (2\pi\cos\theta)^2 \bm{a}^\herm \mathbf{D}_q^\herm \mathbf{D}_q \bm{a} \nonumber \\
&= 4\pi^2\cos^2\theta \sum_{m=1}^M q_m^2 \underbrace{|[\bm{a}]_m|^2}_{=1}= 4\pi^2\cos^2\theta \sum_{m=1}^M q_m^2.
\end{align}
For Term II, we first compute the inner product
\begin{align}\label{eq:inner_a_adot_theta}
\bm{a}^\herm \dot{\bm{a}}_\theta&= \sum_{m=1}^M [\bm{a}]_m^* \big(-j2\pi\cos\theta \cdot q_m \cdot [\bm{a}]_m\big) \nonumber \\
&= -j2\pi\cos\theta \sum_{m=1}^M \underbrace{[\bm{a}]_m^* [\bm{a}]_m}_{=1} q_m= -j2\pi\cos\theta \sum_{m=1}^M q_m.
\end{align}
As a result, Term II can be evaluated as
\begin{align}\label{eq:TermII_tt}
\dot{\bm{a}}_\theta^\herm \bm{a}\bm{a}^\herm \dot{\bm{a}}_\theta= |\bm{a}^\herm \dot{\bm{a}}_\theta|^2= 4\pi^2\cos^2\theta \left(\sum_{m=1}^M q_m\right)^2.
\end{align}
Substituting \eqref{eq:TermI_tt} and \eqref{eq:TermII_tt} back into \eqref{eq:Qtt_expand}, we then obtain
\begin{equation}\label{eq:Qtt_combined}
\mathcal{Q}_{\theta\theta} = 4\pi^2\cos^2\theta \left[\sum_{m=1}^M q_m^2 - \frac{1}{M}\left(\sum_{m=1}^M q_m\right)^2\right].
\end{equation}
Let $\bar{q} = \frac{1}{M}\sum_{m=1}^M q_m$ represent the centroid of the projected positions along the azimuth direction. Applying the variance identity $\sum_{m} (q_m - \bar{q})^2 = \sum_{m} q_m^2 - \frac{1}{M}(\sum_{m} q_m)^2$, the bracketed term in \eqref{eq:Qtt_combined} is the \textit{geometric variance along the $q$-axis}:
\begin{equation}\label{eq:Lqq_def}
\sum_{m=1}^M q_m^2 - \frac{1}{M}\left(\sum_{m=1}^M q_m\right)^2 = \sum_{m=1}^M (q_m - \bar{q})^2 \triangleq \mathcal{L}_{qq}.
\end{equation}
Consequently, $\mathcal{Q}_{\theta\theta} = 4\pi^2\cos^2\theta\, \mathcal{L}_{qq}$. Noting that $\mathcal{Q}_{\theta\theta}$ is real and non-negative, $\Real\{\mathcal{Q}_{\theta\theta}\} = \mathcal{Q}_{\theta\theta}$.

By the same procedure applied to $\dot{\bm{a}}_\phi$, one can obtain
\begin{align}
\dot{\bm{a}}_\phi^\herm \dot{\bm{a}}_\phi &= 4\pi^2\sin^2\theta \sum_{m=1}^M r_m^2, \label{eq:TermI_pp} \\
|\bm{a}^\herm \dot{\bm{a}}_\phi|^2 &= 4\pi^2\sin^2\theta \left(\sum_{m=1}^M r_m\right)^2, \label{eq:TermII_pp}
\end{align}
yielding
\begin{equation}\label{eq:Qpp_final}
\mathcal{Q}_{\phi\phi} = 4\pi^2\sin^2\theta\, \mathcal{L}_{rr},
\end{equation}
where $\mathcal{L}_{rr} \triangleq \sum_{m=1}^M (r_m - \bar{r})^2$ with $\bar{r} = \frac{1}{M}\sum_{m} r_m$.

\paragraph{Cross-Term Derivation} The off-diagonal FIM element $J_{\theta\phi}$ requires the cross quadratic form $\mathcal{Q}_{\theta\phi} = \dot{\bm{a}}_\theta^\herm \mathbf{P}_{\bm{a}}^{\perp} \dot{\bm{a}}_\phi$. To proceed, we expand it with the projector so that
\begin{equation}\label{eq:Qtp_expand}
\mathcal{Q}_{\theta\phi} = \underbrace{\dot{\bm{a}}_\theta^\herm \dot{\bm{a}}_\phi}_{\text{Term III}} - \frac{1}{M}\underbrace{(\dot{\bm{a}}_\theta^\herm \bm{a})(\bm{a}^\herm \dot{\bm{a}}_\phi)}_{\text{Term IV}}.
\end{equation}
To evaluate Term III, we substitute \eqref{eq:adot_theta} and \eqref{eq:adot_phi} to give
\begin{align}\label{eq:TermIII}
\dot{\bm{a}}_\theta^\herm \dot{\bm{a}}_\phi
&= (j2\pi\cos\theta)\, \bm{a}^\herm \mathbf{D}_q (-j2\pi\sin\theta)\, \mathbf{D}_r\, \bm{a} \nonumber \\
&= 4\pi^2\cos\theta\sin\theta\, \bm{a}^\herm \mathbf{D}_q \mathbf{D}_r\, \bm{a} \nonumber \\
&= 4\pi^2\cos\theta\sin\theta \sum_{m=1}^M q_m r_m \underbrace{|[\bm{a}]_m|^2}_{=1} \nonumber \\
&= 4\pi^2\cos\theta\sin\theta \sum_{m=1}^M q_m r_m,
\end{align}
where we have used $\mathbf{D}_q \mathbf{D}_r = \diag(q_1 r_1, \dots, q_M r_M)$.

To obtain Term IV, we start from \eqref{eq:inner_a_adot_theta} and its $\phi$-counterpart $\bm{a}^\herm\dot{\bm{a}}_\phi = -j2\pi\sin\theta\sum_{m} r_m$ such that
\begin{align}\label{eq:TermIV}
(\dot{\bm{a}}_\theta^\herm \bm{a})(\bm{a}^\herm \dot{\bm{a}}_\phi)
&= \overline{(\bm{a}^\herm\dot{\bm{a}}_\theta)} (\bm{a}^\herm\dot{\bm{a}}_\phi) \nonumber \\
&= \big(j2\pi\cos\theta \textstyle\sum_{m} q_m\big)\big(-j2\pi\sin\theta \textstyle\sum_{m} r_m\big) \nonumber \\
&= 4\pi^2\cos\theta\sin\theta \left(\textstyle\sum_{m} q_m\right)\left(\textstyle\sum_{m} r_m\right).
\end{align}
Substituting \eqref{eq:TermIII} and \eqref{eq:TermIV} into \eqref{eq:Qtp_expand} gives
\begin{multline}\label{eq:Qtp_combined}
\mathcal{Q}_{\theta\phi} = 4\pi^2\cos\theta\sin\theta\\
\times \left[\sum_{m=1}^M q_m r_m - \frac{1}{M}\left(\sum_{m=1}^M q_m\right)\!\left(\sum_{m=1}^M r_m\right)\right].
\end{multline}
Applying the cross-covariance identity $\sum_{m}(q_m - \bar{q})(r_m - \bar{r}) = \sum_{m} q_m r_m - M\bar{q}\bar{r} = \sum_{m} q_m r_m - \frac{1}{M}(\sum_{m} q_m)(\sum_{m} r_m)$, the sum inside the bracket above equals the \textit{geometric cross-inertia}, which can be obtained as
\begin{multline}\label{eq:Lqr_identity}
\sum_{m=1}^M q_m r_m - \frac{1}{M}\left(\sum_{m=1}^M q_m\right)\!\left(\sum_{m=1}^M r_m\right) \\
= \sum_{m=1}^M (q_m - \bar{q})(r_m - \bar{r}) \triangleq \mathcal{L}_{qr}.
\end{multline}
Since $\mathcal{Q}_{\theta\phi}$ involves only real-valued position statistics multiplied by a real trigonometric prefactor, we have $\Real\{\mathcal{Q}_{\theta\phi}\} = \mathcal{Q}_{\theta\phi}$, giving
\begin{equation}\label{eq:Qtp_final}
\mathcal{Q}_{\theta\phi} = 4\pi^2\cos\theta\sin\theta\, \mathcal{L}_{qr}.
\end{equation}

\paragraph{Geometric Inertia Matrix} Collecting the three geometric statistics $\mathcal{L}_{qq}$, $\mathcal{L}_{rr}$, and $\mathcal{L}_{qr}$ from \eqref{eq:Lqq_def}, \eqref{eq:Qpp_final}, and \eqref{eq:Lqr_identity}, we define the \textit{geometric inertia matrix}
\begin{equation}\label{eq:Lgeo_mat}
\bm{\mathcal{L}}_{\text{geo}}(\mathbf{P},\phi) \triangleq
\begin{bmatrix}
\mathcal{L}_{qq} & \mathcal{L}_{qr} \\
\mathcal{L}_{qr} & \mathcal{L}_{rr}
\end{bmatrix},
\end{equation}
where the entries are explicitly given by
\begin{subequations}
\begin{align}
\mathcal{L}_{qq} &= \sum_{m=1}^M (q_m - \bar{q})^2, \label{eq:Lqq}\\
\mathcal{L}_{rr} &= \sum_{m=1}^M (r_m - \bar{r})^2, \label{eq:Lrr}\\
\mathcal{L}_{qr} &= \sum_{m=1}^M (q_m - \bar{q})(r_m - \bar{r}). \label{eq:Lqr}
\end{align}
\end{subequations}
Note that $\bm{\mathcal{L}}_{\text{geo}}$ is a $2 \times 2$ symmetric positive semi-definite matrix. Its dependence on $\phi$ arises through the rotation \eqref{eq:rotation}: varying $\phi$ rotates the projection axes, redistributing the port spread between $\mathcal{L}_{qq}$ and $\mathcal{L}_{rr}$. Crucially, the \textit{trace} $\tr(\bm{\mathcal{L}}_{\text{geo}}) = \mathcal{L}_{qq} + \mathcal{L}_{rr} = \sum_{m}[(x_m - \bar{x})^2 + (y_m - \bar{y})^2]$ and the \textit{determinant} $\det(\bm{\mathcal{L}}_{\text{geo}}) = \mathcal{L}_{qq}\mathcal{L}_{rr} - \mathcal{L}_{qr}^2$ are both invariant to $\phi$, since unitary rotation preserves eigenvalues of the scatter matrix.

\paragraph{Assembly of the FIM}
Substituting the quadratic-form results $\mathcal{Q}_{\theta\theta}$, $\mathcal{Q}_{\phi\phi}$, and $\mathcal{Q}_{\theta\phi}$ into the general FIM expression \eqref{eq:fim_2d} with $\frac{2TP_s}{\sigma_n^2} = 2T\SNR$, we have
\begin{subequations}
\begin{align}
J_{\theta\theta}&= 2T \SNR 4\pi^2(\cos^2\theta) \mathcal{L}_{qq}\notag\\
&= 8\pi^2 T \SNR (\cos^2\theta)\mathcal{L}_{qq}, \label{eq:Jtt}\\
J_{\phi\phi}&= 2T \SNR 4\pi^2(\sin^2\theta) \mathcal{L}_{rr}\notag\\
&= 8\pi^2 T \SNR (\sin^2\theta)\mathcal{L}_{rr}, \label{eq:Jpp}\\
J_{\theta\phi}&= 2T \SNR 4\pi^2\cos\theta\sin\theta\, \mathcal{L}_{qr}\notag\\
&= 8\pi^2 T\SNR \cos\theta\sin\theta\;\mathcal{L}_{qr}. \label{eq:Jtp}
\end{align}
\end{subequations}
The FIM thus takes the compact matrix form
\begin{equation}\label{eq:FIM_matrix}
\mathbf{J} = 8\pi^2 T \SNR \times
\begin{bmatrix}
\cos^2\theta\,\mathcal{L}_{qq} & \cos\theta\sin\theta\,\mathcal{L}_{qr} \\
\cos\theta\sin\theta\,\mathcal{L}_{qr} & \sin^2\theta\,\mathcal{L}_{rr}
\end{bmatrix}.
\end{equation}
The structure of \eqref{eq:FIM_matrix} reveals a clean separation: the \textit{trigonometric prefactors} $\cos^2\theta$, $\sin^2\theta$, $\cos\theta\sin\theta$ capture the angular geometry of the wavefront, while the \textit{geometric inertia entries} $\mathcal{L}_{qq}$, $\mathcal{L}_{rr}$, $\mathcal{L}_{qr}$ capture the spatial distribution of ports. This factorization is the two-dimensional generalization of the scalar product $k^2\sin^2\theta\, \mathcal{L}_{\text{geo}}(\mathbf{p})$ in the linear-array case, where $k=2\pi/\lambda$ denotes the wavenumber \cite{Zhang2025}.

\paragraph{Closed-Form CRB for Elevation Angle} The CRB for the individual parameters can be obtained from the diagonal elements of the inverse FIM \cite[Chapter~3]{Kay1993}. For a $2 \times 2$ matrix, the inversion can be explicitly obtained by
\begin{equation}\label{eq:2x2_inv}
\mathbf{J}^{-1} = \frac{1}{\det(\mathbf{J})}
\begin{bmatrix}
J_{\phi\phi} & -J_{\theta\phi} \\
-J_{\theta\phi} & J_{\theta\theta}
\end{bmatrix},
\end{equation}
where the determinant is given by
\begin{align}\label{eq:det_J}
\det(\mathbf{J})
&= J_{\theta\theta}\, J_{\phi\phi} - J_{\theta\phi}^2 \nonumber \\
&= \big(8\pi^2 T \SNR\big)^2 \cos^2\theta\sin^2\theta \big(\mathcal{L}_{qq}\mathcal{L}_{rr} - \mathcal{L}_{qr}^2\big) \nonumber \\
&= \big(8\pi^2 T \SNR\big)^2 \cos^2\theta\sin^2\theta\, \det\!\big(\bm{\mathcal{L}}_{\text{geo}}\big).
\end{align}
The CRB for the elevation angle is $\mathrm{CRB}(\theta) = [\mathbf{J}^{-1}]_{11} = J_{\phi\phi} / \det(\mathbf{J})$. Substituting \eqref{eq:Jpp} and \eqref{eq:det_J} yields
\begin{align}\label{eq:crb_theta_derivation}
\mathrm{CRB}(\theta)&= \frac{8\pi^2 T \SNR (\sin^2\theta) \mathcal{L}_{rr}}{\big(8\pi^2 T \SNR\big)^2 \cos^2\theta\sin^2\theta\, \det(\bm{\mathcal{L}}_{\text{geo}})} \nonumber \\
&= \frac{\mathcal{L}_{rr}}{8\pi^2 T \SNR \cos^2\theta\, \det(\bm{\mathcal{L}}_{\text{geo}})}.
\end{align}
Writing $\det(\bm{\mathcal{L}}_{\text{geo}}) = \mathcal{L}_{qq}\mathcal{L}_{rr} - \mathcal{L}_{qr}^2$ explicitly, we arrive at the closed-form result
\begin{equation}\label{eq:crb_theta_final}
\mathrm{CRB}(\theta) = \frac{1}{8\pi^2 T \SNR \cos^2\theta} \times \frac{\mathcal{L}_{rr}}{\mathcal{L}_{qq}\mathcal{L}_{rr} - \mathcal{L}_{qr}^2}.
\end{equation}

Similarly, the CRB for the azimuth angle is $\mathrm{CRB}(\phi) = [\mathbf{J}^{-1}]_{22} = J_{\theta\theta} / \det(\mathbf{J})$ which can be found as
\begin{align}\label{eq:crb_phi_derivation}
\mathrm{CRB}(\phi)&= \frac{J_{\theta\theta}}{\det(\mathbf{J})}= \frac{8\pi^2 T \SNR (\cos^2\theta) \mathcal{L}_{qq}}{(8\pi^2 T  \SNR)^2 \cos^2\theta\sin^2\theta\, \det(\bm{\mathcal{L}}_{\text{geo}})} \nonumber \\
&= \frac{1}{8\pi^2 T \SNR \sin^2\theta} \times \frac{\mathcal{L}_{qq}}{\mathcal{L}_{qq}\mathcal{L}_{rr} - \mathcal{L}_{qr}^2}.
\end{align}
\end{IEEEproof}

\begin{remark}[Reduction to the Linear-Array Case]
{\em When the array degenerates to a linear topology along the $x$-axis (i.e., $y_m = 0$ for all $m$) and the azimuth is fixed at $\phi = 0$, the projected positions reduce to $q_m = x_m$ and $r_m = 0$. In this case, $\mathcal{L}_{rr} = 0$ and $\mathcal{L}_{qr} = 0$, so the FIM becomes singular in the $\phi$-direction (the linear array cannot resolve azimuth). The estimation problem reduces to a scalar one with $J_{\theta\theta} = 8\pi^2 T \SNR (\cos^2\theta) \mathcal{L}_{qq}$, and the single-parameter CRB is
\begin{equation}\label{eq:crb_1d_recovery}
\mathrm{CRB}(\theta)\big|_{\rm 1\text{-}D} = \frac{1}{8\pi^2 T \SNR (\cos^2\theta) \mathcal{L}_{qq}},
\end{equation}
with $\mathcal{L}_{qq} = \sum_{m}(x_m - \bar{x})^2 = \mathcal{L}_{\text{geo}}(\mathbf{p})$. Recalling that the one-dimensional steering vector uses the convention $e^{-j2\pi p_m \cos\theta}$ (i.e., the direction cosine is $\cos\theta$ rather than $\sin\theta\cos\phi$), the angular sensitivity factor $\cos^2\theta$ here corresponds to $\sin^2\theta$ in the linear-array formulation \cite{Zhang2025}, where $\theta$ is measured from the endfire direction. The geometric content $\mathcal{L}_{\text{geo}}$ is identical, confirming that \eqref{eq:crb_theta_final} is a consistent generalization.}
\end{remark}

\begin{remark}[Geometric Variables]\label{remark:det_interpretation}
\textit{The determinant $\det(\bm{\mathcal{L}}_{\text{geo}}) = \mathcal{L}_{qq}\mathcal{L}_{rr} - \mathcal{L}_{qr}^2$ is the \textbf{generalized geometric variance}, which is equal to the squared area of the port-position scatter ellipse. It is the product of the two eigenvalues of $\bm{\mathcal{L}}_{\text{geo}}$, denoted by $\ell_1 \ge \ell_2 \ge 0$, so $\det(\bm{\mathcal{L}}_{\text{geo}}) = \ell_1 \ell_2$. Maximizing $\det(\bm{\mathcal{L}}_{\text{geo}})$ simultaneously lowers both $\mathrm{CRB}(\theta)$ and $\mathrm{CRB}(\phi)$. When the cross-inertia $\mathcal{L}_{qr} = 0$ (which holds, e.g., for any placement symmetric about both the $q$- and $r$-axes), the FIM becomes diagonal and the two angles are decoupled. That is,
\begin{equation}
\mathcal{L}_{qr} = 0 \implies \mathrm{CRB}(\theta) = \frac{1}{8\pi^2 T \SNR (\cos^2\theta) \mathcal{L}_{qq}},
\end{equation}
recovering the planar analogy of the one-dimensional result. The decoupling condition $\mathcal{L}_{qr} = 0$ is equivalent to requiring that the principal axes of the port scatter ellipse align with the azimuth look direction and its perpendicular---a property automatically satisfied by rectangular grids aligned with $\phi$.}
\end{remark}

\begin{remark}[Geometric CRB Interpretation]\label{remark:crb_scaling}
\textit{From \eqref{eq:crb_theta_final}, the CRB scales inversely with the product $T\SNR$, confirming the classical behavior. More notably, the \textbf{geometry-dependent factor} $\mathcal{L}_{rr} / \det(\bm{\mathcal{L}}_{\text{geo}})$ reveals that it is not the total geometric spread alone but its \textbf{effective component along the elevation-sensitive axis} that governs precision. Specifically, $\mathrm{CRB}(\theta)$ depends on $\mathcal{L}_{qq}$ only through $\det(\bm{\mathcal{L}}_{\text{geo}})$, while $\mathcal{L}_{rr}$ appears in the numerator. A placement that is elongated along the $q$-direction (large $\mathcal{L}_{qq}$, and small $\mathcal{L}_{rr}$) improves elevation estimation but degrades azimuth estimation, and vice versa. The isotropic optimum ($\ell_1 = \ell_2$, i.e., circular scatter) minimizes $\max\{\mathrm{CRB}(\theta), \mathrm{CRB}(\phi)\}$ over all rotations.}
\end{remark}

\begin{remark}[Precision-Ambiguity Tradeoff]\label{remark:tradeoff_2d}
\textit{As in the linear case, a fundamental trade-off between \textbf{estimation precision} and \textbf{ambiguity suppression} persists. From \eqref{eq:crb_theta_final}, concentrating ports near the four corners of the aperture maximizes $\det(\bm{\mathcal{L}}_{\text{geo}})$ since positions far from the centroid contribute the largest squared deviations and hence minimizes the CRB. However, such corner-heavy configurations produce a highly sparse array with large gaps, which amplifies \textbf{false spectral peaks} (grating lobes) in the two-dimensional spatial spectrum and causes severe bearing misdirection. Conversely, a uniform rectangular grid distributes ports evenly, filling the virtual co-array and suppressing false peaks, but at the expense of a \textbf{smaller effective geometric spread}, yielding a higher CRB. This precision-ambiguity trade-off motivates the joint optimization to be investigated in Section~\ref{sec:algorithm_design}.}
\end{remark}

\begin{remark}[Mutual Coupling Effect]\label{remark:mutual_coupling}
\textit{The CRB derived above assumes an ideal array manifold. The estimation precision is governed entirely by the geometric factor $\bm{\mathcal{L}}_{\text{geo}}(\mathbf{P}, \phi)$. Under strong electromagnetic mutual coupling, the effective steering vector deviates from \eqref{eq:steer_2d}, and the FIM becomes hardware-dependent, and a coupled antenna/network model is required \cite{mc1,mc2,mc3}. The present result should therefore be viewed as a geometry-level benchmark that isolates the spatial-design contribution from electromagnetic effects.}
\end{remark}

\begin{algorithm}[t!]
\small
\caption{Regularized Greedy Port Selection for CRB Minimization}\label{alg:greedy}
\KwIn{$M$, aperture $W_x \times W_y$, grid spacing $\Delta$, min spacing $d_{\min}$, diversity weight $\beta$.}
\KwOut{Optimized port positions $\mathbf{P}^*$.}

\tcc{Phase~I: Candidate Generation}
Generate grid $\mathcal{G}$ via \eqref{eq:grid}\;
$\mathcal{C} \leftarrow \{\bm{g} \in \mathcal{G} : \|\bm{g} - \bm{p}_c\| \ge d_{\min},\; \forall\text{ corner } \bm{p}_c\} \setminus \mathcal{S}_0$\;

\tcc{Phase~II: Regularized Greedy Selection}
$\mathcal{S} \leftarrow \mathcal{S}_0 = \{\text{four corners}\}$\;
Initialize running sums $S_{xx}, S_{yy}, S_{xy}, S_x, S_y$ from corner positions\;
\For{$k = 1$ \KwTo $M - 4$}{
$\mathcal{C}_k \leftarrow \{\bm{g} \in \mathcal{C} \setminus \mathcal{S} : \|\bm{g} - \bm{s}\| \ge d_{\min},\; \forall\, \bm{s} \in \mathcal{S}\}$\;
\ForEach{$\bm{g} \in \mathcal{C}_k$}{
Compute $\det(\bm{\mathcal{L}}_{\text{geo}}(\mathcal{S} \cup \{\bm{g}\}))$ via incremental update of \eqref{eq:det_from_sums}\;
Compute $d_{\min}^2(\bm{g}) = \min_{\bm{s} \in \mathcal{S}} \|\bm{g} - \bm{s}\|^2$\;
$\mathrm{score}(\bm{g}) \leftarrow \det(\bm{\mathcal{L}}_{\text{geo}}(\mathcal{S} \cup \{\bm{g}\})) + \beta \cdot d_{\min}^2(\bm{g})$\;
}
$\bm{g}^* \leftarrow \arg\max_{\bm{g} \in \mathcal{C}_k} \mathrm{score}(\bm{g})$\;
$\mathcal{S} \leftarrow \mathcal{S} \cup \{\bm{g}^*\}$\;
Update $S_{xx}, S_{yy}, S_{xy}, S_x, S_y$\;
}
$\mathbf{P}^* \leftarrow \mathcal{S}$\;
\end{algorithm}

\section{Array Placement Algorithm Design}\label{sec:algorithm_design}
Remark~\ref{remark:tradeoff_2d} established the fundamental trade-off between estimation precision and spatial ambiguity suppression. Here, we design a computationally efficient placement algorithm that navigates this trade-off by maximizing the CRB-relevant geometric determinant while maintaining spatial diversity across the aperture interior. The proposed algorithm provides a tool to further validate the discoveries in this work.

\subsection{Discrete Candidate Formulation}
Once a minimum inter-port distance $d_{\min}$ is prescribed, the continuous aperture $[0, W_x] \times [0, W_y]$ is discretized into a regular grid with spacing $\Delta \le d_{\min}$, yielding a candidate set
\begin{equation}\label{eq:grid}
\begin{aligned}
\mathcal{G} = \big\{(i\Delta,\, j\Delta) : 
&\ i = 0,1,\dots,\lfloor W_x/\Delta\rfloor,\\
&\ j = 0,1,\dots,\lfloor W_y/\Delta\rfloor\big\}.
\end{aligned}
\end{equation}
A candidate $\bm{g} \in \mathcal{G}$ is \textit{admissible} only if it respects the minimum distance, i.e., $\|\bm{g} - \bm{p}_c\| \ge d_{\min}$ for each corner $\bm{p}_c$ in \eqref{eq:corner_constraint}. Let $\mathcal{C} \subseteq \mathcal{G}$ denote the filtered admissible set with $|\mathcal{C}| = N_c$ denoting the number of admissible candidates.

\subsection{Limitation of Pure Determinant-Optimal Selection}
From \eqref{eq:crb_theta_final}, minimizing $\mathrm{CRB}(\theta)$ is equivalent to maximizing $\det(\bm{\mathcal{L}}_{\text{geo}})$. However, this \textit{determinant-optimal} (D-optimal) criterion measures only the aggregate scatter of port positions around their centroid. Since boundary positions always provide the largest squared deviations, a pure D-optimal greedy algorithm will concentrate all ports on the aperture periphery, leaving the {\em interior void} (i.e., no ports will be placed or activated in the center area). While this minimizes the CRB in isolation, the resulting placement produces a highly non-uniform virtual co-array with large spatial gaps, which can increase sidelobe levels and spectral ambiguity \cite{Moffet1968}. The same precision-ambiguity trade-off identified in Remark~\ref{remark:tradeoff_2d} manifests here as a concrete algorithmic artifact.

\subsection{Regularized Greedy Selection}
To balance CRB minimization with spatial diversity, we augment the greedy criterion with a \textit{minimum-separation reward} (regularized component) that encourages port dispersion across the aperture interior. At each selection stage $k$, the candidate is chosen as
\begin{equation}\label{eq:reg_greedy}
\bm{g}^{(k)} = \mathop{\arg\max}\limits_{\bm{g} \in \mathcal{C}_k}\;\underbrace{\det\!\big(\bm{\mathcal{L}}_{\text{geo}}(\mathcal{S} \cup \{\bm{g}\})\big)}_{\text{precision (D-optimality)}}
+ \;\beta\;\underbrace{\min_{\bm{s} \in \mathcal{S}} \|\bm{g} - \bm{s}\|^2}_{\text{spatial diversity}},
\end{equation}
where $\mathcal{C}_k = \{\bm{g} \in \mathcal{C} \setminus \mathcal{S} : \|\bm{g} - \bm{s}\| \ge d_{\min}, \forall \bm{s} \in \mathcal{S}\}$ is the feasible subset, and $\beta \ge 0$ is the \textit{diversity weight} that governs the precision-diversity trade-off. The two limiting cases are:
\begin{itemize}
\item $\beta = 0$: this corresponds to pure D-optimal selection and all ports tend to be on the boundary;
\item $\beta \to \infty$: this is the maximum-dispersion packing where ports are spread as far apart as possible.
\end{itemize}
A moderate $\beta$ in \eqref{eq:reg_greedy} steers the algorithm to place a subset of ports in the aperture interior, populating the co-array more uniformly without excessively sacrificing geometric spread.

\subsection{Efficient $\mathcal{O}(1)$ Evaluation}
Since $\bm{\mathcal{L}}_{\text{geo}} \in \R^{2 \times 2}$, its determinant is evaluated from the running sums $S_{xx} = \sum x_m^2$, $S_{yy} = \sum y_m^2$, $S_{xy} = \sum x_m y_m$, $S_x = \sum x_m$, $S_y = \sum y_m$ using
\begin{equation}\label{eq:det_from_sums}
\det(\bm{\mathcal{L}}_{\text{geo}}) = \Big(S_{xx} - \frac{S_x^2}{M_k}\Big)\Big(S_{yy} - \frac{S_y^2}{M_k}\Big) - \Big(S_{xy} - \frac{S_x S_y}{M_k}\Big)^2,
\end{equation}
where $M_k = |\mathcal{S}|$ is the current port count. Adding a candidate $\bm{g} = (g_x, g_y)^\trans$ requires updating each running sum by a single addition. The minimum-distance term $\min_{\bm{s} \in \mathcal{S}} \|\bm{g} - \bm{s}\|^2$ costs $\mathcal{O}(M_k)$ per candidate. The total complexity is therefore $\mathcal{O}((M-4)  N_c  M)$, which remains negligible.

\begin{table}[t]
\centering
\caption{Simulation Parameters}\label{tab:sim_params}
\renewcommand{\arraystretch}{1.15}
\begin{tabular}{l l l}
\hline
\textbf{Parameter} & \textbf{Symbol} & \textbf{Value} \\
			\hline
			Wavelength & $\lambda$ & 1 (normalized) \\
			Default aperture & $W_x \times W_y$ & $2\lambda \times 2\lambda$ \\
			Default port count & $M$ & 25 \\
			Minimum spacing & $d_{\min}$ & $0.2\lambda$ \\
			Candidate grid spacing & $\Delta$ & $d_{\min}/2 = 0.1\lambda$ \\
			Target elevation & $\theta_0$ & $45^\circ$ \\
			Target azimuth & $\phi_0$ & $30^\circ$ \\
			Snapshot count & $T$ & 100 \\
			Operating SNR & --- & 10~dB \\
			SNR sweep range & --- & $-10$ to $30$~dB (step 2~dB) \\
			Diversity weights (Fig.~\ref{fig:placement_beam}) & $\beta_0$ & $\{0,\, 5,\, 10,\, 100\}$ \\
			Representative $\beta_0$ (Fig.~\ref{fig:task2}) & $\beta_0$ & 0.8 \\
			$\beta_0$ sweep (Fig.~\ref{fig:tradeoff}) & $\beta_0$ & $[0, 5]$, 50 points \\
			MC trials (PDF validation) & --- & $10^5$ \\
			MC trials (random baseline) & --- & 500 \\
			MC trials (multi-$(W,M)$) & --- & 200 \\
			Beam pattern grid & $N_{uv}$ & $301 \times 301$ \\
			dB floor (beam plot) & --- & $-30$~dB \\
			\hline
			Multi-$(W,M)$ configurations & $(W, M)$ & $(1\lambda, 5)$, $(2\lambda, 25)$, \\
			& & $(4\lambda, 55)$, $(6\lambda, 85)$ \\
			\hline
\end{tabular}
\end{table}

\begin{figure}[t!]
\centering
\includegraphics[width=\columnwidth]{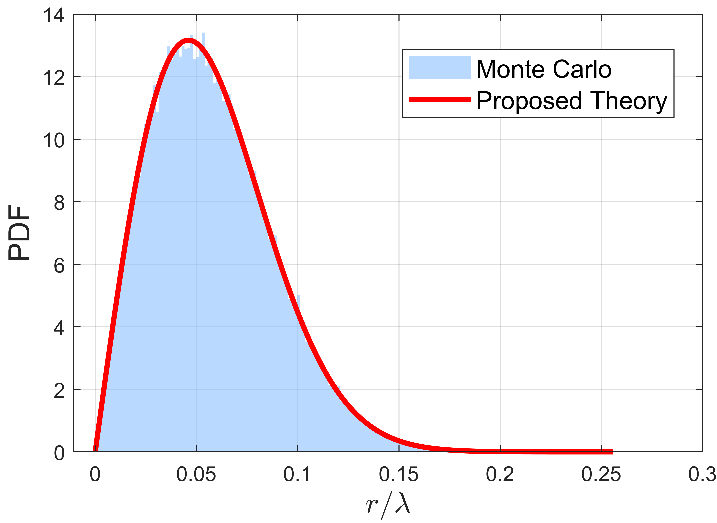}
\caption{Empirical and theoretical PDF of the minimum inter-port distance $\Delta_{\min}$ for $M = 25$ ports uniformly distributed over a $2\lambda \times 2\lambda$ aperture ($\mathcal{A} = 4\lambda^2$).}\label{fig:pdf}
\end{figure}

\begin{figure}[t!]
\centering
\includegraphics[width=\columnwidth]{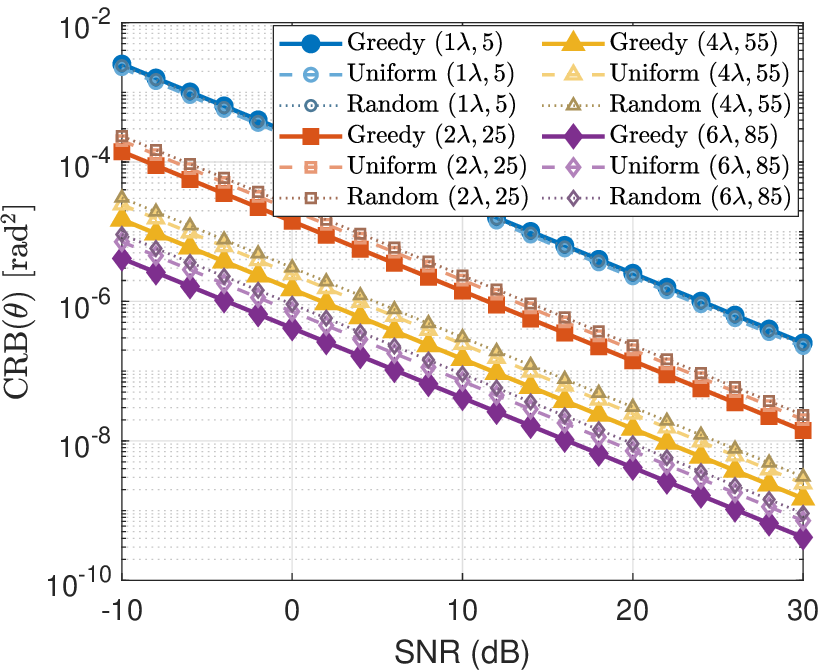}
\caption{CRB vs.\ SNR and geometric determinant for multiple $(W, M)$ configurations with $\beta_0 = 0.8$.}\label{fig:task2}
\end{figure}

\begin{figure}[t!]
\centering
\includegraphics[width=1\columnwidth]{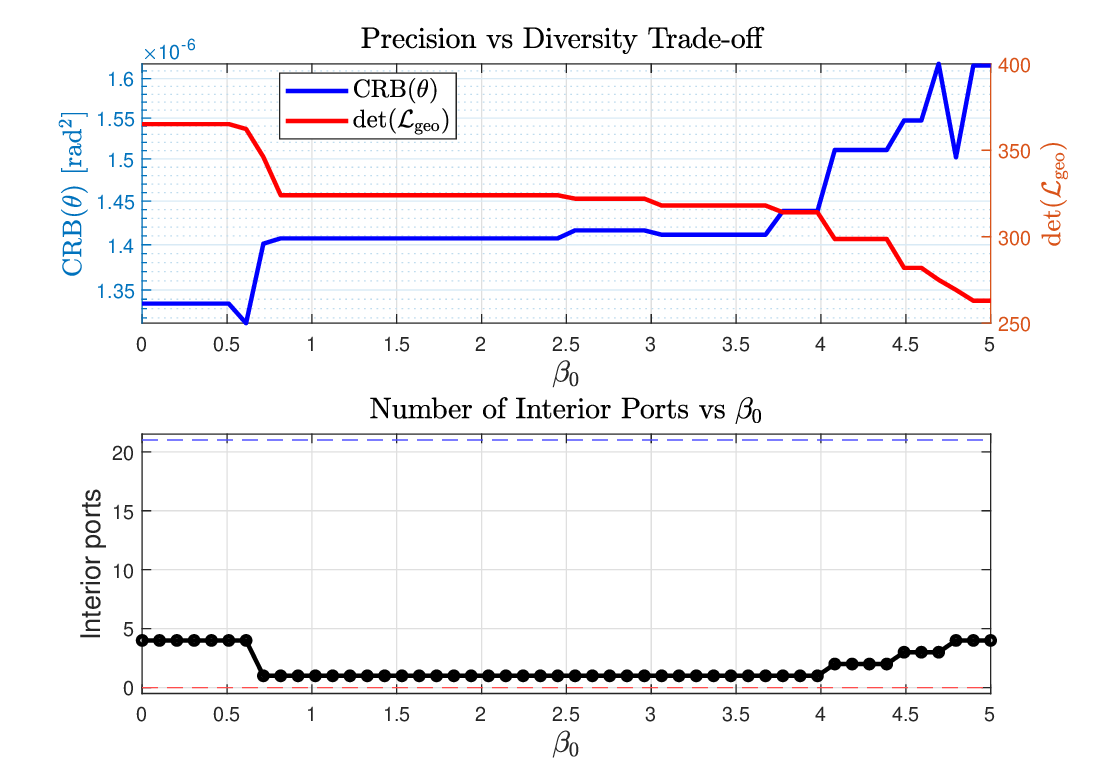}
\caption{Trade-off curves vs.\ $\beta_0$ for the default configuration: (top) CRB and $\det(\bm{\mathcal{L}}_{\mathrm{geo}})$; (bottom) number of interior ports.}\label{fig:tradeoff}
\end{figure}

\subsection{Selection of $\beta$}
The diversity weight $\beta$ controls the fraction of interior ports. To achieve a balance across different setups, we normalize both terms and set
\begin{equation}\label{eq:beta_selection}
\beta = \beta_0 \frac{\det(\bm{\mathcal{L}}_{\text{geo}}(\mathcal{S}_0))}{\mathcal{A}},
\end{equation}
where $\beta_0$ is a dimensionless tuning parameter and $\mathcal{S}_0$ is the initial four-corner set. The rationale is as follows: at the first greedy step, the precision term in~\eqref{eq:reg_greedy} is on the order of $\det(\bm{\mathcal{L}}_{\text{geo}}(\mathcal{S}_0))$, while the diversity term $\min_m \|\mathbf{g}_j - \mathbf{p}_m\|^2$ is on the order of the aperture area~$\mathcal{A}$ (since the squared distances are bounded by $W_x^2 + W_y^2$). Dividing the former by the latter yields a scaling factor that makes the two terms comparable in magnitude when $\beta_0 = 1$, so that $\beta_0$ directly controls the relative emphasis between precision and diversity regardless of the aperture size or port count (i.e., $\beta_0 = 1$ balances the two objectives equally and $\beta_0 > 1$ biases toward spatial diversity while $\beta_0 < 1$ biases toward estimation precision). The complete procedure is summarized in Algorithm~\ref{alg:greedy}.

\subsection{Computational Complexity}
Phase~I runs in $\mathcal{O}(N_c)$. Phase~II iterates $(M-4)$ rounds, each scanning at most $N_c$ candidates with $\mathcal{O}(M)$ operations per candidate, yielding a total of $\mathcal{O}((M-4) \cdot N_c \cdot M)$.

\begin{remark}[Regularized Term $\beta$]\label{remark:beta_effect}
\textit{The diversity weight $\beta$ provides control over the precision-diversity trade-off. As will be shown in Section~\ref{sec:numerical}, increasing $\beta$ from $0$ to moderate values initially causes only a minor CRB increase (as the boundary ports already dominate the geometric spread) while substantially improving the spatial uniformity of the placement. This graceful degradation allows the system designer to tune $\beta$ according to the relative importance of CRB versus ambiguity robustness in the target application scenario.}
\end{remark}

\begin{figure*}[t!]
\centering
\includegraphics[width=\linewidth]{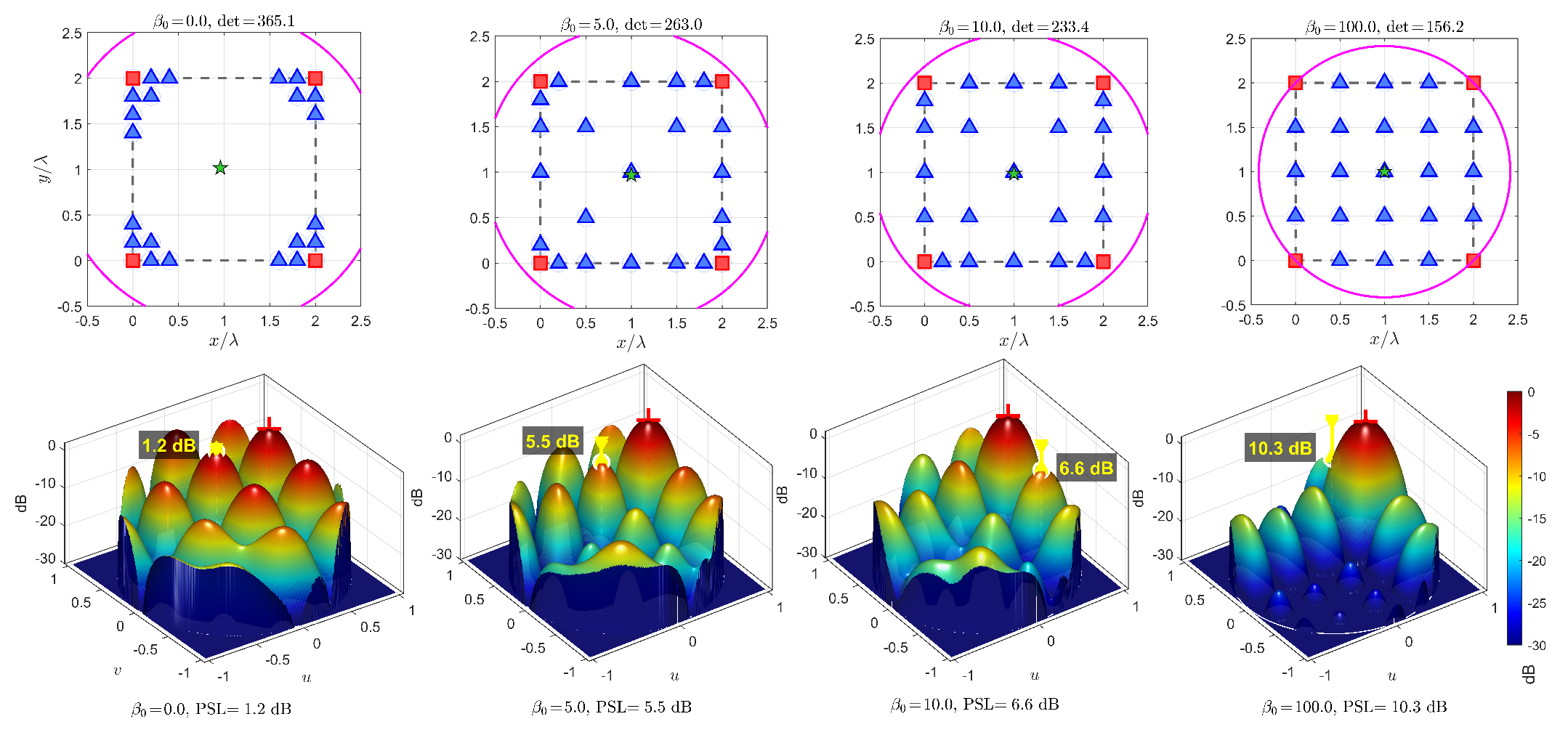}
\caption{Port placement (top row) and beam pattern in the $(u, v)$ plane (bottom row) for $\beta_0 \in \{0,\, 5,\, 10,\, 100\}$ under the default configuration ($W = 2\lambda$, $M = 25$). The yellow annotation indicates the PSL relative to the main beam.}\label{fig:placement_beam}
\end{figure*}

\section{Numerical Results}\label{sec:numerical}
In this section, we evaluate the analytical results derived and the proposed regularized greedy algorithm for finite-aperture planar FAA port selection. All port positions are normalized by the wavelength ($\lambda = 1$). Unless otherwise stated, the default configuration is a square aperture of size $W_x \times W_y = 2\lambda \times 2\lambda$ with $M = 25$ ports, minimum inter-port spacing $d_{\min} = 0.2\lambda$, target direction $(\theta_0, \phi_0) = (45^\circ, 30^\circ)$, snapshot count $T = 100$, and that the operating SNR is set to $10$~dB.

\subsection{Simulation Setup}
Regarding the candidate grid, port positions are selected from a uniform rectangular grid with spacing $\Delta = d_{\min}/2 = 0.1\lambda$ over the aperture $[0, W_x] \times [0, W_y]$. Grid points that violate the minimum distance constraint $d_{\min}$ with respect to any of the four corner ports are excluded. The four corner ports $\{(0,0),\,(W_x,0),\,(0,W_y),\,(W_x,W_y)\}$ are always included to maximize the aperture utilization, and the remaining $M - 4$ ports are selected by the algorithm.

Two placement strategies are provided as benchmarks:
\begin{itemize}
\item {\bf Uniform Grid}---Ports are placed on a regular $M_x \times M_y$ rectangular grid within $[0, W_x] \times [0, W_y]$, in which $(M_x, M_y)$ is chosen such that $M_x M_y \ge M$ with minimum excess. If $M_x M_y > M$, corner ports are retained and the most central excess ports are removed.
\item {\bf Random Placement}---Each realization draws $M$ ports uniformly at random within the aperture, ensuring that the $d_{\min}$ spacing constraints are not violated.
\end{itemize}

For the default configuration, four values of the diversity weight $\beta_0 \in \{0,\, 5,\, 10,\, 100\}$ are compared to visualize the precision-ambiguity trade-off (Remark~5). The beam pattern is computed in the direction-cosine $(u, v)$ plane via the steered array factor, which is calculated as
\begin{equation}
B(u,v) = \left| \frac{1}{M} \sum_{m=1}^{M} e^{\,j2\pi\bigl[x_m(u - u_0) + y_m(v - v_0)\bigr]} \right|^2,
\end{equation}
where $u_0 = \sin\theta_0 \cos\phi_0$ and $v_0 = \sin\theta_0 \sin\phi_0$. The peak sidelobe level (PSL) is measured by comparing the main lobe with the maximum of the remaining pattern.

The following metrics are reported:
\begin{itemize}
\item $\det(\bm{\mathcal{L}}_{\mathrm{geo}})$: geometric determinant of the scatter matrix, governing estimation precision (Proposition~\ref{prop:CRB}).
\item $\mathrm{CRB}(\theta)$: CRB on the elevation angle, computed as $\mathrm{CRB}(\theta) = L_{rr} / (8\pi^2 T\mathrm{SNR} (\cos^2\!\theta_0)  \det(\bm{\mathcal{L}}_{\mathrm{geo}}))$.
\item PSL (dB): peak sidelobe level relative to the main beam, quantifying spatial ambiguity.
\item Number of interior ports: counting ports not on the aperture boundary (with margin $= d_{\min}/2$), indicating the degree of spatial diversity.
\end{itemize}

The parameter settings are detailed in Table~\ref{tab:sim_params}.

\subsection{Results and Discussion}
Fig.~\ref{fig:pdf} validates the Rayleigh approximation of Proposition~\ref{prop:2d_min_dist} for the planar case. The Monte Carlo histogram ($10^5$ realizations) closely matches the theoretical PDF $f(r) = 2\alpha r\, e^{-\alpha r^2}$ with $\alpha = M(M-1)\pi / (2\mathcal{A})$, confirming that the expected minimum spacing $\mathbb{E}[\Delta_{\min}] = \frac{1}{2}\sqrt{2\mathcal{A}/[M(M-1)]}$ provides a statistically grounded reference for choosing $d_{\min}$ when the physical port size is not explicitly modeled.

Fig.~\ref{fig:task2} compares the CRB performance across four $(W, M)$ configurations. First, increasing the aperture $W$ while scaling the port count proportionally (to maintain comparable spatial density) consistently reduces CRB across all SNR levels, confirming that the geometric determinant $\det(\bm{\mathcal{L}}_{\mathrm{geo}})$ scales favorably with aperture size. Additionally, the proposed greedy algorithm consistently outperforms both the uniform grid and random placements in terms of $\det(\bm{\mathcal{L}}_{\mathrm{geo}})$, with the gap widening for larger $(W, M)$ pairs. 

Fig.~\ref{fig:tradeoff} traces the trade-off continuously over $\beta_0 \in [0, 5]$. The CRB increases monotonically (precision degrades) while $\det(\bm{\mathcal{L}}_{\mathrm{geo}})$ decreases, confirming the fundamental tension between estimation accuracy and spatial diversity. The number of interior ports exhibits a step-like transition: for small $\beta_0$, all free ports remain on the boundary, but beyond a critical threshold they rapidly migrate inward. This sharp transition suggests that in practice, a moderate $\beta_0$ suffices to achieve a favorable balance, avoiding the extreme of either all-boundary placement (high precision, high ambiguity) or fully dispersed placement (low ambiguity, degraded precision).

Finally, Fig.~\ref{fig:placement_beam} visualizes the precision-ambiguity trade-off described in Remark~5. As we can see, for $\beta_0 = 0$ (pure D-optimal), all free ports cluster near the aperture boundary, maximizing $\det(\bm{\mathcal{L}}_{\mathrm{geo}})$ but producing severe grating lobes (low PSL). As $\beta_0$ increases, the diversity regularizer progressively pushes ports toward the interior, yielding a more uniform spatial distribution. The beam patterns in the bottom row clearly illustrate the sidelobe suppression: the PSL improves monotonically with $\beta_0$, while $\det(\bm{\mathcal{L}}_{\mathrm{geo}})$ decreases (wider main beam). The scatter ellipses in the top row shrink toward a circle as the placement becomes more isotropic, consistent with the convergence of $\bm{\mathcal{L}}_{\mathrm{geo}}$ toward a scaled identity.

\section{Conclusion}
This work established a systematic analytical framework for finite-aperture planar FAAs. We proved that the minimum inter-port distance under random planar placement follows a Rayleigh law scaling as $\mathcal{O}(M^{-1})$, quadratically more favorable than the linear-array counterpart. We derived universal closed-form CRBs governed by a $2\times2$ geometric inertia matrix $\bm{\mathcal{L}}_{\mathrm{geo}}$ with azimuth-invariant trace and determinant, and identified a fundamental precision-ambiguity trade-off intrinsic to finite-aperture placement. A regularized greedy algorithm was proposed to navigate this trade-off, consistently outperforming uniform-grid and random baselines. These results confirm that geometry diversity in planar fluid antennas is fully predictable and analytically tractable, offering a deterministic performance dimension complementary to fading diversity.

Future work will focus on jointly optimizing port positions and beamforming weights with explicit peak-sidelobe-level constraints, converting the current indirect trade-off into a directly controllable multi-objective design.


\end{document}